\theoremstyle{plain}
\newtheorem{theorem}{Theorem}
\newtheorem{proposition}[theorem]{Proposition}
\newtheorem{corollary}[theorem]{Corollary}
\theoremstyle{definition}
\newtheorem{question}{Question}
\theoremstyle{remark}
\newtheorem*{remark*}{Remark}
\newcommand{\defeq}{\colonequals}
\newcommand{\stress}{\textit}
\newcommand{\Vorobev}{Vorob{\textquotesingle}ev}
\newcommand{\ie}{i.e.~}
\newcommand{\eg}{e.g.~}
\newcommand{\clS}{\bar{S}}
\newcommand{\psection}[1]{\paragraph*{#1.---}\hspace{-1em}}
\begin{document}

\title{Contextuality with Pauli observables in cycle scenarios}
%%\date{\today}

%%%%%%%%%%%%%%%%%%%%%%%%%%%%%%%%%%%%%%%%%%%%%%%%%%%%%%%%%%%%%%%%%%%

\newcommand{\inllong}{INL -- International Iberian Nanotechnology Laboratory, Av.~Mestre Jos\'e Veiga s/n, 4715-330 Braga, Portugal}
\newcommand{\inlshort}{INL -- International Iberian Nanotechnology Laboratory, Braga, Portugal}
\newcommand{\haslablong}{HASLab, INESC TEC, Universidade do Minho, Departamento de Informática, Campus de Gualtar, 4710-057 Braga, Portugal}
\newcommand{\haslabshort}{HASLab, INESC TEC, Universidade do Minho, Braga, Portugal}
\newcommand{\diumlong}{Departamento de Inform\'atica, Universidade do Minho, Campus de Gualtar, 4710-057 Braga, Portugal}
\newcommand{\diumshort}{Departamento de Inform\'atica, Universidade do Minho, Braga, Portugal}

\author{Raman Choudhary}
\affiliation{\inlshort}
\affiliation{\haslabshort}
\affiliation{\diumshort}

\author{Rui Soares Barbosa}
\affiliation{\inlshort}

%%%%%%%%%%%%%%%%%%%%%%%%%%%%%%%%%%%%%%%%%%%%%%%%%%%%%%%%%%%%%%%%%%%
\begin{abstract}
Contextuality is a fundamental marker of quantum non-classicality, which
has emerged as a key resource for quantum computational advantage in multiple settings.
Many such results hinge specifically on contextuality witnessed through Pauli measurements.
In this work, we initiate a systematic study of (state-dependent) Pauli contextuality,
focusing on cycle measurement scenarios, the simplest scenarios capable of exhibiting contextual behaviour.
First, we study realizability of cycle scenarios with multi-qubit Pauli observables:
we show that the maximum size of a cycle faithfully realizable by $m$-qubit Paulis is upper bounded by $3m$,
while we construct explicit realizations of cycles of size $2m-1$ or $2m$, depending on whether $m \not\equiv 1 \pmod{3}$ or $m \equiv 1 \pmod{3}$.
Then, we investigate the presence of contextuality:
we prove that no $n$-cycle Pauli realization for $n > 4$ can witness contextuality (on any quantum state),
whereas for $n = 4$ every Pauli realization exhibits contextuality, attaining the quantum bound for all noncontextuality inequalities on some pure state.
Finally, we discuss arbitrary Pauli scenarios in light of \Vorobev’s theorem, and show that,
contrary to what our cycle characterization might suggest, the presence of $4$-cycles is not necessary for witnessing contextuality in general Pauli scenarios.
% Contextuality is a fundamental, rigorous marker of non-classicality exhibited by quantum theory,
% which has been shown to be a resource for quantum computational advantage in a variety of settings.
% Such results often boil down to contextuality relative to multi-qubit Pauli observables.
% Spurred by this observation, in this work we initiate a systematic study of (state-dependent) Pauli contextuality,
% by focusing on simple scenarios where the measurement compatibility structure is described by a $n$-cycle graph. We consider the realizability of cycle scenarios with $m$-qubit Pauli observables, and the potential for witnessing contextuality therein.
% First, we show that the maximum size of a cycle realizable strictly by $m$-qubit Pauli observables is upper bounded by $3m$. We then provide a construction that guarantees faithful Pauli realization of a cycle of size $2m-1$ or $2m$, depending on whether $m \not\equiv 1$ (mod 3) or $m \equiv 1$ (mod 3) respectively. We then prove that no $n$-cycle Pauli realization for $n > 4$ can witness contextuality (on any quantum state),
% whereas for $n = 4$ every Pauli realization exhibits contextuality, attaining the quantum bound for all noncontextuality inequalities on some pure state.
% Finally, we discuss the effect of cycles on the presence of contextuality in arbitrary Pauli scenarios in light of \Vorobev’s theorem, which requires the presence of cycles for a scenario to witness contextuality. 
\end{abstract}
\maketitle
%\tableofcontents

\psection{Introduction}%
Quantum computational advantage must rely on features of quantum theory that cannot be simulated classically.
Contextuality -- the impossibility of regarding quantum measurements as revealing predetermined outcomes --
stands out as a fundamental, rigorous marker of non-classicality,
with mounting evidence for its role as a resource for quantum speed-up in a variety of settings,
ranging from universal quantum computation via magic state distillation \cite{howard2014contextuality} to measurement-based quantum computing \cite{raussendorf2013contextuality} to variational quantum algorithms \cite{kirby2019contextuality}, among others.
Notably, many of these results boil down to the presence of contextuality witnessed through (qudit or qubit) Pauli observables.
After all, Paulis form the backbone of many models of quantum computation.

Pauli observables have also been studied in quantum foundations and, specifically in the qubit case,
used to provide especially simple and appealing proofs of contextuality,
namely proofs of state-independent \stress{strong} contextuality, with a powerful logical flavour and applicable to \stress{any} quantum state.
Two of the simplest are given by the Peres--Mermin square \cite{peres1990incompatible} and the Mermin star \cite{mermin1990simple}.
But while strong contextuality relative to $m$-qubit Pauli observables has been extensively studied
-- from such state-independent proofs to state-dependent arguments such as GHZ \cite{greenberger1989going,greenberger1990bell}, or more generally, All-versus-Nothing arguments \cite{abramsky2015ccp,abramsky2017avn} --
the \stress{weaker} notion of probabilistic contextuality with Pauli measurements has received comparatively little attention,
even though it does play a crucial role in many of the applications.

Here, we initiate a systematic study of state-dependent Pauli contextuality.
For reasons that we further expound below,
we focus on the class of scenarios whose measurement compatibility structure is given by a cycle graph.
These are the simplest and most fundamental scenarios capable of displaying contextual behaviour.
Our main result provides a complete characterization of Pauli contextuality in such cycle scenarios,
revealing a striking dichotomy:
while Pauli realizations of the 4-cycle (a.k.a.~CHSH) scenario always achieve the Tsirelson bound, the maximal possible quantum violation of the associated noncontextuality inequalities,
no larger cycle scenario can witness contextuality using Pauli measurements.
This finding is complemented by several observations regarding the faithful realizability of cycle scenarios using Pauli operators on a fixed number $m$ of qubits, including an impossibility result establishing an upper bound of $3m$ for the cycle size and an explicit construction of cycles of size $2m$ or $2m-1$.

Together, these results illuminate fundamental structural constraints on Pauli-based contextuality.
We conclude with a brief examination of more general Pauli scenarios, demonstrating through a counterexample
that the presence of 4-cycles is not a necessary condition for contextuality in that broader setting,
so that the full landscape of Pauli contextuality is yet to be mapped.

\psection{Further motivation}
Let us delve a little deeper into the motivation.
Pauli measurements play a central role in many models of quantum computation, in both near-term and fault-tolerant settings.
On the one hand, they are of course essential in quantum error correction via stabiliser codes \cite{gottesman1997stabilizer}.
Also, Pauli-based computation, a recently proposed quantum universal model,
involves sequential measurements of adaptively chosen commuting Pauli observables on an initially prepared magic state \cite{bravyi2016trading}.
On the other hand, hybrid algorithms like variational quantum eigensolvers (VQEs), seen as strong candidates for witnessing practical quantum advantage in NISQ devices, also involve measurements of Pauli observables in each iteration.
Focusing on the measurement part of VQE, Kirby and Love~\cite{kirby2019contextuality} defined a (state-independent) contextuality test on VQE instances, and
subsequently~\cite{kirby2020classical} showed that the classical simulation of noncontextual VQEs is \stress{only} an NP-complete problem,
%% \ie it is classically hard, as opposed to \stress{quantumly} hard represented by the QMA complexity class, to which the problem of simulating arbitrary VQE instances belongs.
while simulating arbitrary VQE instances is QMA-hard.
That is, noncontextual instances are only classically hard, as opposed to \stress{quantumly} hard.

The contextuality test for VQE instances proposed in Ref.~\cite{kirby2019contextuality} concerns
%precisely such
state-independent contextuality.
The instances considered in that work are given by Hamiltonians expressed in Pauli decomposition, \ie as a linear combination of operators from the $m$-qubit Pauli group $\mathcal{P}_m$.
Starting from the set $S \subseteq \mathcal{P}_m$ of Paulis
appearing in such a decomposition, it considers its closure $\clS$ under products of commuting elements \footnote{$\clS$ is the partial Abelian group (in the same sense of partial Boolean algebra) generated by $S$ \stress{within} $\mathcal{P}_m$; we refer to upcoming work for more about this perspective.}.
The instance is deemed to be contextual if $\clS$ 
exhibits state-independent strong contextuality (like the above-mentioned square and star).
The authors also provide a necessary and sufficient condition on the set $S$ for this to happen.
The initial motivation for the present work comes from noticing the following caveat in this analysis.
The VQE algorithm performs only the measurements in $S$, not those in $\clS$.
As such, in a classical simulation, it suffices to reproduce the statistics of these measurements on certain quantum states,
regardless of whether they could be consistently extended to encompass $\clS$.
A more refined classicality analysis
would thus focus on the (weak) contextuality of the correlations
realizable by the measurements of $S$ alone on some quantum state.

Prompted by the above discussion, we consider the following question:
%In this work, we initiate a systematic study of the following question, prompted by the discussion above:
\textit{given a set $S$ of $m$-qubit Pauli operators, when does it \stress{directly} witness contextuality?}

One way to approach this question is to take the compatibility graph of
the Pauli operators in $S$ and treat it as an abstract measurement scenario.
One could then derive noncontextuality inequalities for this scenario and test for violations with the given Pauli realization.
%One can then derive all the noncontextuality inequalities for this scenario and test whether the given Pauli realisation can violate any of them.
A major bottleneck, however, is that finding all the noncontextuality inequalities for an arbitrary scenario encoded as a compatibility graph is an NP-complete problem.

Hence, as a way to get a handle on the general problem, we start by considering a particular class of measurement scenarios:
those where the measurement compatibility structure is given by a cycle graph $C_n$. Two main reasons inform this choice:
%%  \makeatletter
%%  \let\orig@listi\@listi
%%  \def\@listi{\orig@listi\topsep=0pt}%\baselineskip}
%%  \makeatother
\begin{enumerate}[leftmargin=*]
 \setlength{\topsep}{0pt}
  \setlength{\parskip}{0pt}
    \item All the noncontextuality inequalities for these scenarios have been explicitly characterised by Ara\'ujo et al.~\cite{araujo2013all}. In fact, these are the only fully characterised non-Bell-type scenarios to date.
    \item \Vorobev's theorem \cite{vorobev1962consistent}, phrased in terms of compatibility graphs, requires the presence of at least one cycle of size $n \geq 4$ as an \textit{induced} subgraph for a scenario to admit any contextual correlations. Cycle graphs are thus the minimal measurement scenarios that may exhibit contextuality.
\end{enumerate}

\psection{Contributions}
Focusing on such cycle scenarios, we pose the two questions that we address in this work.
First, which $n$-cycle scenarios are faithfully realizable by $m$-qubit Pauli operators (for each fixed $m$)
\footnote{A faithful realization of a graph $G=(V_G,E_G)$ by $m$-qubit Paulis is an assignment $P \colon V_G \to \mathcal{P}_m$ of $m$-qubit Pauli operators to vertices of the graph such that $P(v)$ and $P(w)$ commute \textit{if and only if} $v=w$ or $\{v,w\}\in E$.
When $G$ is a cycle of size at least $4$ a realization must necessarily be injective. Hence, it amounts to an isomorphic copy of $G$ as an induced subgraph of the compatibility graph of $\mathcal{P}_m$.}?
Second, which such Pauli cycles witness contextuality on some quantum state?

%% We fully characterize the contextuality witnessing aspects of $m$-qubit Pauli operators in $n$-cycle scenarios. To this end, we first analyze the faithful realizability of these scenarios by $m$-qubit Pauli operators. In doing so,
Addressing the first question,
we prove a no-go (impossibility) theorem that provides a cycle size, as a function of $m$, above which no cycle can be faithfully realized by $m$-qubit Pauli operators. We also ran sub-graph isomorphism tests to find out that the upper bound in the no-go theorem is not always tight.
This directed our focus to the realizablility of cycles below that bound. %that are potentially realizable, for a given $m$, by the no-go theorem. Consequently,
We provide a construction that guarantees faithful Pauli realization of all cycles up to a size given as a function of $m$.
This also produces realizations of the largest cycles we could achieve for generic $m$, considerably improving (approximately by a factor of $2$) on another general construction found in the literature.

Moving on to the second question, we fully characterized $n$-cycle contextuality with $m$-qubit Paulis, showing that only 4-cycles can witness contextuality, and that those do so maximally, reaching Tsirelson's bound. 
Thus, owing to \Vorobev’s theorem, which guarantees that any contextuality-witnessing
scenario contains an induced $n$-cycle sub-scenario, we provide a sufficient but not necessary condition to witness contextuality in an arbitrary KS scenario realizable by $m$-qubit Pauli operators.

In the remainder of the text, we provide a concise summary of our main results and proofs, which give the flavour of the key methods used. Full details can be found in the corresponding appendices.

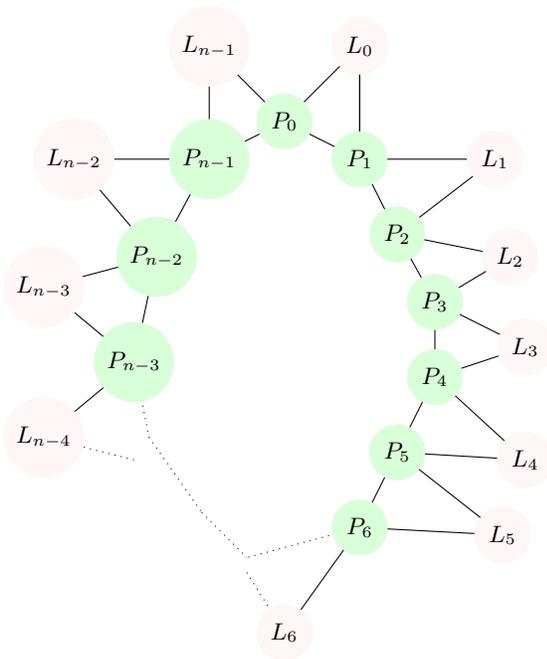
\begin{figure}
\begin{center}
\begin{tikzpicture}
[roundnode1/.style={circle, fill=green!15, minimum size=3mm}, roundnode2/.style={circle, fill=pink!15, minimum size=3mm}]
\draw (0,0)node[roundnode1]{$P_0$}--(1.0,-0.5)node[roundnode1]{$P_1$};
\draw (1.0,-0.5)node[roundnode1]{$P_1$}--(1.5,-1.5)node[roundnode1]{$P_2$};
\draw (1.5,-1.5)node[roundnode1]{$P_2$}--(2.0,-2.4)node[roundnode1]{$P_3$};
\draw (2.0,-2.4)node[roundnode1]{$P_3$}--(2.0,-3.4)node[roundnode1]{$P_4$};
\draw (2.0,-3.4)node[roundnode1]{$P_4$}--(1.5,-4.4)node[roundnode1]{$P_5$};
\draw (1.5,-4.4)node[roundnode1]{$P_5$}--(1.0,-5.4)node[roundnode1]{$P_6$};
\draw [dotted](1.0,-5.4)--(-0.5,-5.8);
%\draw [dotted](-0.5,-5.8)--(-2.5,-5.7);
\draw (0,0)node[roundnode1]{$P_0$}--(-1.0,-0.5)node[roundnode1]{$P_{k-1}$};
\draw (-1.0,-0.5)node[roundnode1]{$P_{n-1}$}--(-1.7,-1.8)node[roundnode1]{$P_{n-2}$};
\draw (-1.7,-1.8)node[roundnode1]{$P_{n-2}$}--(-2.0,-3.2)node[roundnode1]{$P_{n-3}$};
\draw [dotted](-2.0,-3.2)--(-1.8,-4.2);
\draw [dotted](-1.8,-4.2)--(-1.1,-5.2);
\draw [dotted](-1.1,-5.2)--(-0.5,-5.8);
\draw (0,0)node[roundnode1]{$P_0$}--(-1.0,1.0)node[roundnode2]{$L_{n-1}$};
\draw (-1.0,-0.5)node[roundnode1]{$P_{n-1}$}--(-1.0,1.0)node[roundnode2]{$L_{n-1}$};
\draw (-1.0,-0.5)node[roundnode1]{$P_{n-1}$}--(-2.8,-0.5)node[roundnode2]{$L_{n-2}$};
\draw (-1.7,-1.8)node[roundnode1]{$P_{n-2}$}--(-2.8,-0.5)node[roundnode2]{$L_{n-2}$};
\draw (-1.7,-1.8)node[roundnode1]{$P_{n-2}$}--(-3.2,-2.2)node[roundnode2]{$L_{n-3}$};
\draw (-2.0,-3.2)node[roundnode1]{$P_{n-3}$}--(-3.2,-2.2)node[roundnode2]{$L_{n-3}$};
\draw (-2.0,-3.2)node[roundnode1]{$P_{n-3}$}--(-3.2,-4.2)node[roundnode2]{$L_{n-4}$};
\draw [dotted](-2.0,-4.5)--(-3.2,-4.2)node[roundnode2]{$L_{n-4}$};
\draw (0,0)node[roundnode1]{$P_0$}--(1,1)node[roundnode2]{$L_0$};
\draw (1.0,-0.5)node[roundnode1]{$P_1$}--(1,1)node[roundnode2]{$L_0$};
\draw (1.0,-0.5)node[roundnode1]{$P_1$}--(1,1)node[roundnode2]{$L_0$};
\draw (1.0,-0.5)node[roundnode1]{$P_1$}--(2.8,-0.5)node[roundnode2]{$L_1$};
\draw (1.5,-1.5)node[roundnode1]{$P_2$}--(2.8,-0.5)node[roundnode2]{$L_1$};
\draw (1.5,-1.5)node[roundnode1]{$P_2$}--(3.0,-1.8)node[roundnode2]{$L_2$};
\draw (2.0,-2.4)node[roundnode1]{$P_3$}--(3.0,-1.8)node[roundnode2]{$L_2$};
\draw (2.0,-2.4)node[roundnode1]{$P_3$}--(3.2,-3.0)node[roundnode2]{$L_3$};
\draw (2.0,-3.4)node[roundnode1]{$P_4$}--(3.2,-3.0)node[roundnode2]{$L_3$};
\draw (2.0,-3.4)node[roundnode1]{$P_4$}--(3.2,-4.5)node[roundnode2]{$L_4$};
\draw (1.5,-4.4)node[roundnode1]{$P_5$}--(3.2,-4.5)node[roundnode2]{$L_4$};
\draw (1.5,-4.4)node[roundnode1]{$P_5$}--(2.9,-5.5)node[roundnode2]{$L_5$};
\draw (1.0,-5.4)node[roundnode1]{$P_6$}--(2.9,-5.5)node[roundnode2]{$L_5$};
\draw (1.0,-5.4)node[roundnode1]{$P_6$}--(0.0,-6.8)node[roundnode2]{$L_6$};
\draw [dotted](-0.5,-6)--(0.0,-6.8)node[roundnode2]{$L_6$};
\end{tikzpicture}
\caption[5]{Construction used in proving the no-go theorem and in characterizing Pauli contextuality for $n$-cycle scenarios. Arbitrary Pauli $n$-cycle (green) and the constructed edge Paulis $L_i=P_{i}P_{i\oplus 1}$ (pink). Each edge Pauli commutes with all other edge Paulis except the nearest and next-nearest neighbours on both sides. % e.g. $L_2$ commutes with all $L_l$ except when $l \in \{L_0,L_1,L_3,L_4\}$
}
\label{fig1}
\end{center}
\end{figure}

\begin{figure}
\begin{center}
\begin{tikzpicture}
[roundnode1/.style={circle, fill=green!15, minimum size=3mm}, roundnode2/.style={circle, fill=pink!15, minimum size=3mm}]
\draw (0,0)node[roundnode1]{$XI$}--(0,2)node[roundnode1]{$IX$};
\draw (0,2)node[roundnode1]{$IX$}--(-2,1)node[roundnode1]{$YX$};
\draw (-2,1)node[roundnode1]{$YX$}--(-2,-1)node[roundnode1]{$ZY$};
\draw (-2,-1)node[roundnode1]{$ZY$}--(0,-2)node[roundnode1]{$IY$};
\draw (0,-2)node[roundnode1]{$IY$}--(0,0)node[roundnode1]{$XI$};
\draw (0,2)node[roundnode1]{$IX$}--(2,1)node[roundnode1]{$ZX$};
\draw (2,1)node[roundnode1]{$ZX$}--(2,-1)node[roundnode1]{$YY$};
\draw (2,-1)node[roundnode1]{$YY$}--(0,-2)node[roundnode1]{$IY$};
%\draw (2,-2)node[roundnode1]{$IIZ$}--(0,-2)node[roundnode1]{$IXI$};
%\draw (0,-2)node[roundnode1]{$IXI$}--(0,2)node[roundnode1]{$IIX$};

\end{tikzpicture}
\caption[5]{Counterexample falsifying the conjecture that a 4-cycle is necessary for an arbitrary scenario to witness Pauli contextuality: Pauli realization of two conjoined 5-cycles, which violates a non-cycle noncontextuality inequality.}
\label{fig2}
\end{center}
\end{figure}
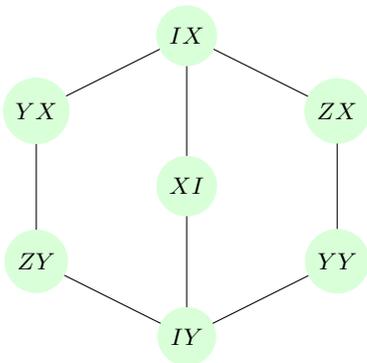

\psection{Impossible Pauli cycles}
We first consider the question of establishing which $n$-cycle scenarios are faithfully realizable by $m$-qubit Pauli operators.
To build some intuition on this, we ran subgraph isomorphism tests within the compatibility graphs of $\mathcal{P}_m$ for small qubit number $m$.
We found that $2$-qubit Pauli operators cannot realize cycles beyond size 6, while $3$-qubit Paulis cannot realize cycles beyond size 9.
This suggested a general investigation to identify an upper bound on the size of cycles admitting $m$-qubit Pauli realizations.

We derived a no-go theorem showing it is impossible to faithfully realize an $n$-cycle scenario with $m$-qubit Pauli operators when $n > 3m$. This bound is clearly tight for $m = 2,3$, but continuing our computational search further revealed that it is not tight for $m = 4$, where the largest faithfully realizable cycle has size 9 and not 12.
Similarly, for $m = 5$, the largest faithfully realizable cycle is the 12-cycle, instead of the 15-cycle scenario allowed in principle by the no-go theorem.
We also observe that, for a fixed $m$, some values of $n$ are `skipped': \eg there is a $9$-cycle of $3$-qubit Paulis, but no $8$-cycle.

The proof of our no-go theorem goes as follows.
We start off with an arbitrary realization $\{P_i\}_{i=0}^{n-1}$ of an $n$-cycle with $m$-qubit Paulis and extend it through the construction depicted in \Cref{fig1}.
For each edge in the cycle, we consider an additional ``edge'' Pauli, $L_i = P_i P_{i \oplus 1}$ \footnote{These new edge Paulis are not taken to be part of the contextuality scenario; they are simply a device to help derive a contradiction.}. 
Crucially, whether each $L_i$ commutes or anti-commutes with the original cycle Paulis and with the other edge Paulis
can be determined by a general rule, which does not depend on the concrete Pauli realization we started from.
Namely, a Pauli $Q$ commutes with $L_i$ if and only if it commutes with both $P_i$ and $P_{i\oplus 1}$ or with neither.
In other words, the neighbourhood of $L_i$ in the compatibility graph of Pauli operators is the exclusive disjunction
between those of $P_i$ and of $P_{i\oplus 1}$.
More specifically for cycle realizations, each edge Pauli commutes with all other edge Paulis except for its nearest and next-nearest neighbours on either side.

We use this commutativity structure
to show that the set $S = \{P_0,L_0,L_3,L_6,\ldots,L_{3\lceil\frac{n}{3}\rceil-6}\} = \{P_0\} \cup \{L_{3j}\}_{j=0}^{\lceil\frac{n}{3}\rceil-2}$
consists of $\lceil\frac{n}{3}\rceil$ independent, pairwise-commuting Paulis.
Since the maximum such set within the $m$-qubit Pauli group $\mathcal{P}_m$ has size $m$, we conclude that
$\lceil\frac{n}{3}\rceil \leq m$, hence $n \leq 3m$.
This shows that any cycle faithfully realizable by $m$-qubit Paulis has size at most $3m$.

Full details of this proof can be found in \Cref{Impossible realizations}. 

%For the computational search, we ran (induced) subgraph isomorphism tests within the compatibility graphs of $\mathcal{P}_m$, the $m$-qubit Pauli group, for each small $m$ to note that $3m$ in not a tight upper bound in general.

\psection{Constructing Pauli cycle realizations}
The no-go theorem precludes $m$-qubit Pauli realizations of cycles beyond size $3m$.
Complementarily, we now focus on constructing some possible Pauli realizations.
A general construction by Abramsky, Cercelescu, and Constantin~\cite{abramsky2024commutation}
yields a realization of an $m$-cycle with $m$-qubit Paulis.
%To this end, we
We considerably improve on this by constructing a faithful realization of a cycle of size $2m-1$ or $2m$, depending on whether $m \not\equiv 1$ (mod 3) or $m \equiv 1$ (mod 3).
The same construction further guarantees the existence of faithful realizations of all cycles with size ranging from 3 to $2m-3$ or to $2m-2$, again depending on whether $m \not\equiv 1$ (mod 3) or $m \equiv 1$ (mod 3).

%(i) clubbing together two independent faithful realizations of path graphs $H_l$ and $H_l'$ to produce a faithful realization of a bigger path graph $H_{l+l'- 2}$
We base this construction on two more general results:
\begin{enumerate}[label=(\roman*),leftmargin=*]
 \setlength{\topsep}{0pt}
  \setlength{\parskip}{0pt}
\item faithful realizations of the path graphs $H_l$ and $H_{l'}$, with $m$- and $m'$-qubit Paulis respectively,
can be concatenated 
to produce a faithful realization of the larger path graph $H_{l+l'- 2}$ with $(m+m')$-qubit Paulis;
\item any faithful  the path graph $H_l$ with $m$-qubit Paulis can be extended to a faithful realization of the cycle $C_l$ with $(m+1)$-qubit Paulis.
\end{enumerate}
These constructions are depicted in \Cref{Path_graph_combination,Path_graph_to_cycle}, respectively.
%% See Fig.~\ref{Path_graph_combination} where a faithful $m$-qubit Pauli realization of $H_l$ is stacked with a $m'$-qubit Pauli realization of $H_l'$ to construct a faithful $(m+m')$-qubit Pauli realization of $H_{l+l'-2}$, depicting result (i), and see Fig.~\ref{Path_graph_to_cycle} depicting result (ii).

%Repeated application of (i) for the case $m = m' = 3$ and $l = l' = 8$ to construct faithful realizations of bigger paths to then convert them to a faithful cycle realizations via (ii) constitutes the basis of our results that guarantees the faithful realizations claimed above. This construction performs far better than the one available in the literature that only guarantees a faithful realization of all cycle scenarios up to $C_m$ using $m$-qubit Pauli operators.

Repeated application of (i) starting with a concrete realization for $m = 3$ and $l = 8$ yields
a faithful realization of the path graph $H_{6 p + 2}$ using $3p$-qubit Paulis for any $p \geq 1$.
This is then converted via (ii) to a faithful realization of the $(6 p + 2)$-cycle using $3p+1$-qubit Paulis.
This iterated construction constitutes the basis of our results: by a straightforward change of variable to fix the number of qubits $m$, one obtains the faithful realizations claimed above.

More details on our constructions of faithful cycle realizations can be found in \Cref{Actualizing faithful realizations}.

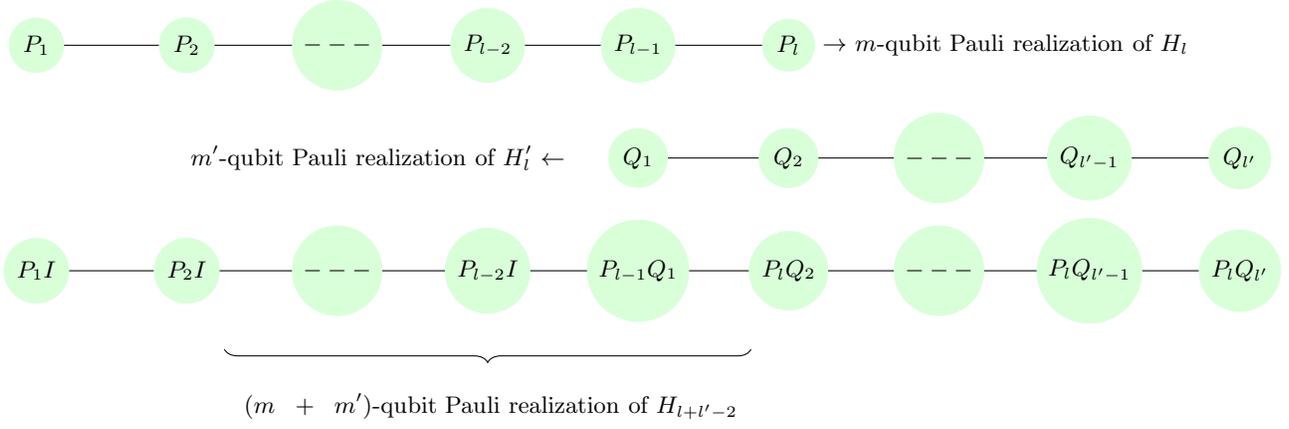
\begin{figure*}
\begin{center}
\begin{tikzpicture}
[roundnode1/.style={circle, fill=green!15, minimum size=05mm}, roundnode2/.style={circle, fill=pink!15, minimum size=5mm}]
\node[text width=5.5cm] at (3.2,0) 
    {$\rightarrow$ $m$-qubit Pauli realization of $H_l$};
\node[text width=5.5cm] at (-5.2,-1.5) 
    {$m'$-qubit Pauli realization of $H_l'$ $\leftarrow$};
\draw [decorate,decoration={brace,amplitude=5pt,raise=4ex}]
  (-0.5,-3.5) -- (-7.5,-3.5);
\node[text width=10cm] at (-2.2,-4.8) 
    {$(m+m')$-qubit Pauli realization of $H_{l+l'-2}$}; 
\begin{scope}
%\node[roundnode1]{$P_5$};
%\node[roundnode1]{$P_6$};
%\node[roundnode2]{$X_0$};
%\node[roundnode2]{$X_4$};
\draw (-10,0)node[roundnode1]{$P_1$}--(-8,0)node[roundnode1]{$P_2$};
\draw (-8,0)node[roundnode1]{$P_2$}--(-6,0)node[roundnode1]{$---$};
\draw (-6,0)node[roundnode1]{$---$}--(-4,0)node[roundnode1]{$P_{l-2}$};
\draw (-4,0)node[roundnode1]{$P_{l-2}$}--(-2,0)node[roundnode1]{$P_{l-1}$};
\draw (-2,0)node[roundnode1]{$P_{l-1}$}--(0,0)node[roundnode1]{$P_{l}$};
\end{scope}
\begin{scope}
%\node[roundnode1]{$XII$};
%\node[roundnode1]{$IXI$};
%\node[roundnode1]{$ZIX$};
%\node[roundnode1]{$ZZI$};
%\node[roundnode1]{$IZZ$};
%\node[roundnode1]{$P_5$};
%\node[roundnode1]{$P_6$};
%\node[roundnode2]{$X_0$};
%\node[roundnode2]{$X_4$};
\draw (-2,-1.5)node[roundnode1]{$Q_1$}--
(0,-1.5)node[roundnode1]{$Q_2$};
\draw (0,-1.5)node[roundnode1]{$Q_2$}--
(2,-1.5)node[roundnode1]{$---$};
\draw (2,-1.5)node[roundnode1]{$---$}--
(4,-1.5)node[roundnode1]{$Q_{l'-1}$};
\draw (4,-1.5)node[roundnode1]{$Q_{l'-1}$}--
(6,-1.5)node[roundnode1]{$Q_{l'}$};
\end{scope}
\begin{scope}
%\node[roundnode1]{$XII$};
%\node[roundnode1]{$IXI$};
%\node[roundnode1]{$ZIX$};
%\node[roundnode1]{$ZZI$};
%\node[roundnode1]{$IZZ$};
%\node[roundnode1]{$P_5$};
%\node[roundnode1]{$P_6$};
%\node[roundnode2]{$X_0$};
%\node[roundnode2]{$X_4$};
\draw (-10,-3)node[roundnode1]{$P_1I$}--(-8,-3)node[roundnode1]{$P_2I$};
\draw (-8,-3)node[roundnode1]{$P_2I$}--(-6,-3)node[roundnode1]{$---$};
\draw (-6,-3)node[roundnode1]{$---$}--(-4,-3)node[roundnode1]{$P_{l-2}I$};
\draw (-4,-3)node[roundnode1]{$P_{l-2}I$}--(-2,-3)node[roundnode1]{$P_{l-1}Q_1$};
\draw (-2,-3)node[roundnode1]{$P_{l-1}Q_1$}--
(0,-3)node[roundnode1]{$P_lQ_2$};
\draw (0,-3)node[roundnode1]{$P_lQ_2$}--
(2,-3)node[roundnode1]{$---$};
\draw (2,-3)node[roundnode1]{$---$}--
(4,-3)node[roundnode1]{$P_lQ_{l-1}$};
\draw (4,-3)node[roundnode1]{$P_lQ_{l'-1}$}--
(6,-3)node[roundnode1]{$P_lQ_{l'}$};
\end{scope}
\end{tikzpicture}
\caption{Concatenating a faithful $m$-qubit Pauli realization of the path graph $H_l$ with a faithful $m'$-qubit Pauli realization of $H_{l'}$ to obtain a ($m+m'$)-qubit Pauli realization of $H_{l+l'-2}$.
This works by appending $m'$-qubit identity operators in the first $l-2$ operators of the realization of path graph $H_l$ while introducing the Pauli realization for $H_{l'}$ from the $(l-1)$th node in $H_l$. In the additional $l'-2$ nodes, one then fixes $P_l$ in the first $m$ qubits.} 
\label{Path_graph_combination}
\end{center}
\end{figure*} 

\begin{figure*}
\begin{center}
\begin{tikzpicture}
[roundnode1/.style={circle, fill=green!15, minimum size=05mm}, roundnode2/.style={circle, fill=pink!0, minimum size=0mm}]
\node[text width=5.5cm] at (3.2,0) 
    {$\rightarrow$ $m$-qubit Pauli realization of $H_l$};
\node[text width=6.8cm] at (4,-1.5) 
    {$\rightarrow$ $(m+1)$-qubit Pauli realization of         $   C_l$};    
\draw (-10,0)node[roundnode1]{$P_1$}--(-8,0)node[roundnode1]{$P_2$};
\draw (-8,0)node[roundnode1]{$P_2$}--(-6,0)node[roundnode1]{$---$};
\draw (-6,0)node[roundnode1]{$---$}--(-4,0)node[roundnode1]{$P_{l-2}$};
\draw (-4,0)node[roundnode1]{$P_{l-2}$}--(-2,0)node[roundnode1]{$P_{l-1}$};
\draw (-2,0)node[roundnode1]{$P_{l-1}$}--(0,0)node[roundnode1]{$P_{l}$};
\begin{scope}
%\node[roundnode1]{$P_5$};
%\node[roundnode1]{$P_6$};
%\node[roundnode2]{$X_0$};
%\node[roundnode2]{$X_4$};
\draw (-10,-1.5)node[roundnode1]{$XP_1$}--(-8,-1.5)node[roundnode1]{$IP_2$};
\draw (-8,-1.5)node[roundnode1]{$IP_2$}--(-6,-1.5)node[roundnode1]{$---$};
\draw (-6,-1.5)node[roundnode1]{$---$}--(-4,-1.5)node[roundnode1]{$IP_{l-2}$};
\draw (-4,-1.5)node[roundnode1]{$IP_{l-2}$}--(-2,-1.5)node[roundnode1]{$IP_{l-1}$};
\draw (-2,-1.5)node[roundnode1]{I$P_{l-1}$}--(0,-1.5)node[roundnode1]{$YP_{l}$};
\draw (0,-1.5)node[roundnode1]{$YP_{l}$}--(0,-2.5);
\draw (0,-2.5)--(-10,-2.5);
\draw (0,-2.5)--(-10,-2.5);
\draw (-10,-2.5)--(-10,-1.5)node[roundnode1]{$XP_1$};
\end{scope}
\end{tikzpicture}
\caption{Converting a faithful $m$-qubit Pauli realization of path graph $H_l$ to a faithful $(m+)1$-qubit Pauli realization of $C_l$. This works by appending a single-qubit identity operator everywhere except the first and the last vertices of $H_l$, whose operators anticommute with each other. To make them commute, we choose two anti-commuting single-qubit Pauli operators, such as $X$ and $Y$, and append one to each of the two vertices, as shown.} 
\label{Path_graph_to_cycle}
\end{center}
\end{figure*}
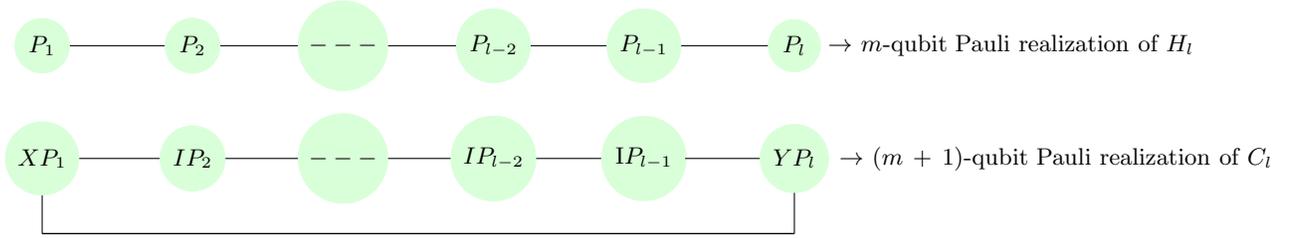

\psection{Pauli contextuality in cycle scenarios}
% We now turn our attention to the characterization of contextuality witnessing aspects of $m$-qubit Pauli operators in $n$-cycle scenarios. 
% We prove that for $ n > 4$, no $n$-cycle scenario can witness contextuality using $m$-qubit Pauli operators, for any $m$. In other words, only the 4-cycle non-contextuality inequalities can be violated using multi-qubit Pauli operators.
% We further show that every faithful $m$-qubit Pauli realization of the 4-cycle scenario reaches the quantum bound (i.e. the Tsirelson bound) of some 4-cycle non-contextuality inequality, independent of $m$.
We now turn our attention to the question of witnessing contextuality in $n$-cycle scenarios with $m$-qubit Pauli measurements.
We provide a complete characterization, which splits into two extreme cases.
For $n>4$, no faithful $n$-cycle of multi-qubit Paulis can witness contextuality: every quantum state yields noncontextual correlations.
By contrast, for $n=4$, every faithful Pauli realization not only exhibits contextuality but even reaches the quantum bound (also known as Tsirelson's bound) for all $4$-cycle non-contextuality inequalities: for each such inequality, there exists a pure quantum state attaining the quantum bound.

Our analysis is enabled by the complete characterization of the non-contextual polytopes of $n$-cycle scenarios in Ref.~\cite{araujo2013all}. %where all facet-defining non-contextuality inequalities of these scenarios were derived.
All the facet-defining non-contextuality inequalities for the $n$-cycle scenario have the form
%The methodology involved in this result begins by writing the quantum version of the general form of the following $n$-cycle inequalities derived in Ref.~\cite{araujo2013all}:
\begin{equation}\label{eq:NCineq}
\textstyle\sum\nolimits_{i=0}^{n-1}\gamma_{i}\langle A_{i}A_{i\oplus 1}\rangle \overset{\textit{NCHV}}{\leq} n- 2
\end{equation}
where $\gamma_{i} \in \{-1,+1\}$ with $\prod_{i=0}^{n-1} \gamma_i = -1$ and $\{A_i\}_{i=0}^{n-1}$ are the measurements corresponding to each vertex in the cycle.

For a Pauli realization, each term of the sum on the left corresponds to an edge Pauli $L_i$, again as in \Cref{fig1}, so that the quantity being bounded is the expectation value of the operator
\[\Gamma \defeq \sum_{i=0}^{n-1}\gamma_{i} L_i \text{.}\]
Squaring $\Gamma$ and using simple algebra and the commutativity properties of edge Paulis,
we conclude:
\begin{itemize}[leftmargin=*]
 \setlength{\topsep}{0pt}
  \setlength{\parskip}{0pt}
    \item for $n>4$, that $\langle \Gamma^2\rangle \leq n^2 - 4n < (n-2)^2$,
    %from which Theorem~\mbox{\ref{no_contextuality}} follows;
    from which it follows that the inequality in \eqref{eq:NCineq} is not violated;
    \item for $n=4$, that $\Gamma^2 = 4(I+\gamma_1\gamma_3 P_0P_1P_2P_3)$, where $\gamma_1\gamma_3 \in \{-1,+1\}$, so that any $\gamma_1\gamma_3$-eigenstate $\ket{\psi}$ of $P_0P_1P_2P_3$ (itself a Pauli operator) yields $\langle \Gamma^2 \rangle_{\ket{\psi}} = 8$ and $\langle \Gamma \rangle_{\ket{\psi}} = \pm 2\sqrt{2}$.
\end{itemize}

Details of these derivations can be found in \Cref{sec:contextuality_pauli}:
in particular, the $n=4$ case is analysed in \Cref{n=4}, the $n = 5$ case in \Cref{n=5}, and this is then generalised to arbitrary $n > 4$ in \Cref{n>4}.

\psection{Pauli contextuality in scenarios beyond cycles}
\Vorobev's theorem \cite{vorobev1962consistent} implies that any contextuality-witnessing scenario contains an $n$-cycle with $n \geq 4$ as an induced sub-scenario.
In light of our complete characterization of Pauli contextuality for $n$-cycle scenarios, a natural question ensues:
is the presence of a $4$-cycle as an induced sub-scenario necessary and sufficient for an arbitrary Kochen--Specker scenario to witness contextuality with multi-qubit Pauli measurements?

We answer this question negatively.
While the condition is clearly sufficient given our previous results,
we show that it is not necessary by exhibiting a counterexample.
We provide a scenario with no induced $4$-cycles but admitting a $2$-qubit Pauli realization that violates a noncontextuality inequality.
\Cref{fig2} depicts its compatibility graph, together with the Pauli realization.
The scenario consists of two $5$-cycles glued along two edges.

We found new non-cycle inequalities for this scenario using the \textsc{PORTA} package \cite{porta}.
One such inequality is written, in operator-theoretic form and for this particular Pauli realization, as
\begin{equation}
    \begin{split}
     -(ZI+XZ+YI+XY+YI)+ ZI - \\(IY + XI + ZY + YX) \leq 4I .   
    \end{split}
\end{equation}
The maximum eigenvalue of the operator sum on the left-hand side turns out to be  $4.2716 > 4$,
implying that the inequality is violated for some (eigen)state.

This violation happens despite the fact that individual cycles in the scenario cannot exhibit contextuality by themselves,
offering a counterexample to the conjecture implicit in the question above.
One may conclude that, while the presence of cycles in arbitrary scenarios is required to impede \Vorobev's inductive construction
of a probability distribution on joint outcomes for all measurements,
it is not the case that every instance of contextuality can be witnessed within a single cycle.

%This last question is addressed in Section.~\ref{Arbitrary_cases}.
Further details are provided in \Cref{Arbitrary_cases}.

\bibliographystyle{plain}
\let\oldaddcontentsline\addcontentsline% Store \addcontentsline
\renewcommand{\addcontentsline}[3]{}% Make \addcontentsline a no-op
\bibliography{references}
\let\addcontentsline\oldaddcontentsline% Restore \addcontentsline

\clearpage
\appendix
\setcounter{secnumdepth}{2}

\tableofcontents

\section{Constraints on faithful Pauli realizations of cycles}\label{Constraints of faithful Pauli realizations of cycles}
%To answer what $n$-cycles are faitfully realizable by $m$-qubit Pauli operators, we developed on our guiding intuition that constructing edge Pauli operators over a Pauli realized cycle must enforce constraints which might be leveraged to derive an upper bound on the maximal allowed cycle size. Turns out that our intuition works. 
Before discussing our results on qubit Pauli contextuality and faithful realization of cycle scenarios by Pauli operators, we start by describing the structural constraints 
enforced on any faithful Pauli realization of a cycle scenario. These constraints are a consequence of the properties of qubit Pauli operators in the Pauli group $\mathbf{\mathcal{P}_m}$ as well as the faithfulness condition on the realization. These constraints drive all the analyses to be followed later on in the paper.  

We start with some basic notational clarification. Throughout the rest of the work,
an $m$-qubit Pauli operators means an operator composed of tensor product of $m$ single qubit operators: the identity operator and/or the following Pauli operators:
$$
X = %\sigma_{x} \equiv
\begin{bmatrix}
     0 & 1 \\ 1 & 0
\end{bmatrix},\;
Y = %\sigma_{y} \equiv 
\begin{bmatrix}
     0 & -i \\ i & 0
\end{bmatrix},\;
Z = %\sigma_{z} \equiv 
\begin{bmatrix}
     1 & 0 \\ 0 & -1
\end{bmatrix}.
$$
Note that because we are concerned with Hermitian qubit Pauli operators, we will only make use of $\pm 1$ multiples of $m$-qubit Pauli operators in $\mathbf{\mathcal{P}_m}$. 
We call an $m$-qubit operator a `cycle Pauli', denoted as $P_i$, if it realizes the $ith$ node of the cycle graph under discussion. We call an $m$-qubit operator an `edge Pauli', denoted as $L_i$, if it is a product of two neighbouring cycle Pauli operators, i.e. $L_i = P_iP_{i \oplus 1}$. Furthermore, throughout this work, we keep making use of the following properties of Pauli operators in $\mathbf{\mathcal{P}_m}$: (i) self-Inverse $P^2 = I$, $\forall \; P \in \mathcal{P}_m$ (ii) The product of two Hermitian Pauli operators is a Hermitian Pauli operator, i.e.~for $P,Q$ $\in \mathcal{P}_m$  if $P^{\dagger} = P$, $Q^{\dagger} = Q$, and $[P,Q] = 0$, then $(PQ)^{\dagger} = PQ$ where $PQ \in  \mathcal{P}_m.$ (iii) two Paulis either commute or anti-commute, i.e. either $[P,Q] = 0$ or $\{P,Q\} = 0$. We now describe the structural constraints on Pauli operators realizing a cycle scenario faithfully.

We divide up the constraints into two types (i) Forbidden edge Paulis (ii) Commutativity constraints. For a given $m$-qubit Pauli realization for a cycle scenario, the first constraint specifies the Pauli operators that can never be edge Pauli operators. While the second constraint defines the commutativity relations  within the set of edge Paulis and the ones across the edge and cycle Pauli operators.
\subsection{Forbidden Edge Paulis}
The following two are the forbidden properties of edge Pauli operators, given a faithful Pauli realization of a cycle scenario:
\begin{itemize}
    \item An edge Pauli $L_i=P_{i}P_{i \oplus 1}$ cannot be a multiple of any of the cycle Paulis constituting it, i.e. $L_i \neq k P_{i}$. This is because if $L_i  = k P_{i} $, then $ P_{i\oplus1}= kI $. This would mean that $P_{i\oplus1}$ will have to commute with all cycle Paulis in the cycle, which is forbidden because of the faithfulness condition.
    \item An edge Pauli $L_i=P_{i}P_{i \oplus 1}$ cannot be a multiple of any of the cycle Paulis not constituting it, i.e. $L_i \neq kP_l $ where $l \in \{0,1,2,3...,n-1\} \setminus \{i$, $ i\oplus 1\}$. This is because $L_i = k P_l $ along with the fact that $[L_i,P_i] = 0$ implies that $[P_i,P_l]=0$. In other words, the cycle Pauli $P_i$ must commute with a non-neighbouring Pauli $P_l$. This is forbidden because of the faithfulness condition.
    
\end{itemize} 
\subsection{Commutativity relations}
\label{commutativity}
The following commutativity relations prove to be very fruitful for the results to follow later. First three conditions define the commutative properties within the set of edge Pauli operators while the last one defines the commutation relation between edge and cycle Pauli operators: 
\begin{enumerate}
    \item No edge Pauli commutes with its nearest neighboring edge Paulis, i.e.~$\{L_i,L_{i\oplus 1}\}= 0$ and $\{L_i,L_{i\oplus (n-1)}\}= 0$.
    \item No edge Pauli commutes with its next-nearest edge Pauli operator except for when $n = 4$, i.e.~for all $n > 5$ cycle Pauli realizations $\{L_i,L_{i\oplus 2}\}= 0$ and $\{L_i,L_{i \oplus (n-2)}\}= 0$.
    \item For  $n>5$ , $[L_{i},L_j]=0$ only when $ i \oplus 3 \leq j \leq i\oplus (n-3) $. In words, an edge Pauli commutes with all edge Paulis except with its neighbouring and next-nearest neighbouring ones. 
    \item An edge Pauli commutes with all cycle Paulis except for the neighbours of the cycle Paulis constituting it, i.e.~$[L_i,P_l]=0$ only when $ l \neq i\oplus (n-1)$ and $i\oplus 2$. 
    
\end{enumerate} 
With these results we are now well equipped to begin our analysis involving faithful Pauli realizations of cycle scenarios. 
\section{Impossible faithful realizations}
\label{Impossible realizations}
Using the above-mentioned constraints, we can derive upper bounds on the size of the cycle scenario that is faithfully realizable by $m$-qubit Pauli operators. This question is motivated by our computational searches where we ran subgraph isomorphism tests to see which cycles are realizable by 2,3-qubit Pauli operators. We found that all cycles from size 4 to 6 are realizable by 2-qubit Paulis, whereas for 3-qubit Paulis all cycles from size 4 to 9, except, surprisingly, the 8-cycle are realizable. We now explain as to why a 7-cycle (or bigger) realization by 2-qubit Pauli operators is impossible. Fig.~\ref{unrealizability_of_7-cycle} will serve as an illustration for this proof.
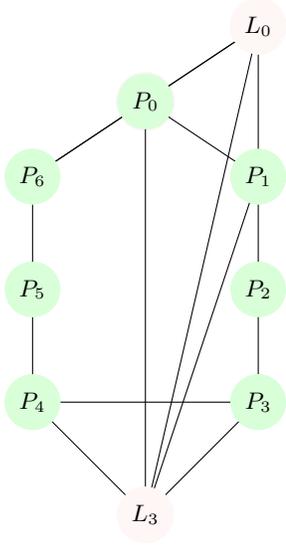
\begin{figure}
\begin{center}
\begin{tikzpicture}
[roundnode1/.style={circle, fill=green!15, minimum size=5mm}, roundnode2/.style={circle, fill=pink!15, minimum size=5mm}]
\node[roundnode1]{$P_0$};
\node[roundnode1]{$P_1$};
\node[roundnode1]{$P_2$};
\node[roundnode1]{$P_3$};
\node[roundnode1]{$P_4$};
\node[roundnode1]{$P_5$};
\node[roundnode1]{$P_6$};
\node[roundnode2]{$X_0$};
\node[roundnode2]{$X_4$};
\draw (0,0)node[roundnode1]{$P_0$}--(1.5,-1)node[roundnode1]{$P_1$};
\draw (1.5,-1)node[roundnode1]{$P_1$}--(1.5,-2.5)node[roundnode1]{$P_2$};
\draw (1.5,-2.5)node[roundnode1]{$P_2$}--(1.5,-4.0)node[roundnode1]{$P_3$};
\draw (1.5,-4.0)node[roundnode1]{$P_3$}--(-1.5,-4.0)node[roundnode1]{$P_4$};
\draw (-1.5,-4.0)node[roundnode1]{$P_4$}--(-1.5,-2.5)node[roundnode1]{$P_5$};
\draw (-1.5,-2.5)node[roundnode1]{$P_5$}--(-1.5,-1)node[roundnode1]{$P_6$};
\draw (-1.5,-1)node[roundnode1]{$P_6$}--(0,0)node[roundnode1]{$P_0$};
\draw (-1.5,-1)node[roundnode1]{$P_6$}--(0,0)node[roundnode1]{$P_0$};
\draw (0,0)node[roundnode1]{$P_0$}--(1.5,1.0)node[roundnode2]{$L_0$};
\draw (0,0)node[roundnode1]{$P_0$}--(1.5,1.0)node[roundnode2]{$L_0$};
\draw (1.5,-1)node[roundnode1]{$P_1$}--(1.5,1.0)node[roundnode2]{$L_0$};
\draw (1.5,-4.0)node[roundnode1]{$P_3$}--(0,-5.5)node[roundnode2]{$L_3$};
\draw (-1.5,-4.0)node[roundnode1]{$P_4$}--(0,-5.5)node[roundnode2]{$L_3$};
\draw (0,0)node[roundnode1]{$P_0$}--(0,-5.5)node[roundnode2]{$L_3$};
\draw (1.5,1.0)node[roundnode2]{$L_0$}--(0,-5.5)node[roundnode2]{$L_3$};
\draw (1.5,-1)node[roundnode1]{$P_1$}--(0,-5.5)node[roundnode2]{$L_3$};
\end{tikzpicture}
\caption{In a 2-qubit Pauli realization of a 7-cycle (in green) we introduce two edge Paulis (in pink) $L_0=P_{0}P_{1}$ ; $L_3=P_{3}P_{4}$ and some more edges allowed by the four conditions obtained in Section~\ref{commutativity}.}
\label{unrealizability_of_7-cycle}
\end{center}
\end{figure}

Assume that the cycle Paulis in Fig.~\ref{unrealizability_of_7-cycle} are 2-qubit Pauli operators. We now use the fact that a maximal stabilizer subgroup for $m$ qubits has $m$ independent generators to show that such a cycle is impossible to realize. For that consider two independent commuting Paulis \{$P_{0}$, $L_{0}$\} and the maximal subgroup, say $\mathcal{R}$, generated by them i.e. $ \mathcal{R} \equiv \langle P_{0}$,$L_{0} \rangle $. Since $L_3$ commutes with the generators in this maximal subgroup, hence $L_3 \in \mathcal{R}$ i.e. $L_3={P_{0}}^{\alpha}{L_{0}}^{\beta} $ where $\alpha,\beta \in \{0,1\}$. Consider the following possibilities:
\begin{itemize}
\item $\alpha = 0, \beta = 1$ $\implies$ $L_3=L_{0}$: This is forbidden because $[L_3,L_4] \neq 0$, but $[L_{0},L_{4}] = 0$. These relations follow from conditions 2 and 3 respectively in Sec.~\ref{commutativity}.
\item $\alpha = 1, \beta = 0\implies L_3=P_0$: This is forbidden because if it is true, then $P_1=I$ which breaks the faithfulness of the realization.
\item $\alpha = 0, \beta = 0 \implies L_3=I$: This is forbidden because if it is true, then $P_3=P_4$ which breaks the faithfulness of the realization.
\item $\alpha = 1, \beta = 1 \implies L_3=P_{0}L_{0} =P_{0}P_{0}P_{1}=P_1$: This is forbidden since it enforces the edges $(P_3,P_1)$ and $(P_4,P_1)$, hence violating the faithfulness of the realization.
\end{itemize}
This exhausts all the possibilities and hence such a cycle realization is prohibited. It can be trivially seen that the same argument will hold for any two-qubit Pauli realization of an $n$-cycle, $\forall$  $n \geq 7$. We now generalise this result in the form of the following theorem.
\begin{theorem}
\label{no_go_theorem}
    For a given $m$, there exists no faithful $m$-qubit Pauli realization of an $n$-cycle scenario such that $n > 3m$. 
\end{theorem}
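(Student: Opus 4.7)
The plan is to take any faithful $m$-qubit Pauli realization $\{P_i\}_{i=0}^{n-1}$ of the $n$-cycle, introduce the auxiliary edge Paulis $L_i := P_i P_{i\oplus 1}$ from \Cref{fig1}, and exhibit an explicit set of $\lceil n/3 \rceil$ pairwise commuting and independent elements of $\mathcal{P}_m$. Since a maximal set of independent, pairwise commuting elements of $\mathcal{P}_m$ has size exactly $m$ (the rank of a maximal stabilizer subgroup), this forces $\lceil n/3 \rceil \leq m$, equivalently $n \leq 3m$, as claimed.

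Concretely, I would work with
\[
    S \;:=\; \{P_0\} \,\cup\, \{L_{3j} : 0 \leq j \leq \lceil n/3 \rceil - 2\},
\]
which has exactly $\lceil n/3 \rceil$ elements. \emph{Pairwise commutativity} follows from the structural results of \Cref{commutativity}: for any two selected edge Paulis $L_{3j}, L_{3j'}$ the cyclic distance between $3j$ and $3j'$ along $C_n$ is a nonzero multiple of $3$, hence at least $3$, so relation~3 there yields $[L_{3j}, L_{3j'}] = 0$. Relation~4 says $[L_i, P_0] \neq 0$ would force $i \in \{1, n-2\}$, and a short case check across the residues of $n \bmod 3$ confirms that no chosen index $3j$ equals either value within the admissible range.

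The more delicate step is \emph{independence}---that no nontrivial product of elements of $S$ equals $\pm I$. My plan is to argue by contradiction: assume such a relation, expand each $L_{3j}$ as $P_{3j} P_{3j \oplus 1}$, and note that the index pairs $\{3j, 3j \oplus 1\}$ are pairwise disjoint for distinct $j$ (which is precisely why the selected $L$'s are spaced by $3$). Any hypothetical relation then reduces to equating some $P_i$ either with $\pm I$, with another cycle Pauli $P_l$ for non-adjacent $l$, or with a product of non-neighbouring cycle Paulis; each outcome injects a forbidden commutation into the cycle compatibility graph, contradicting faithfulness. This mirrors, and uniformises across the residues $n \bmod 3$, the template of the $7$-cycle warm-up preceding the theorem.

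The main obstacle I anticipate is this last combinatorial enumeration: certifying that \emph{every} nontrivial subset-product relation among the elements of $S$ is ruled out. I expect it to go through cleanly because each new $L_{3j}$ contributes two fresh cycle-Pauli factors not appearing in any other generator of $S$, so the expansion of any candidate relation is essentially rigid and its offending consequences read off directly from the forbidden-edge-Pauli rules of Section~3.1. Once independence is secured, the stabilizer rank bound delivers $\lceil n/3 \rceil \leq m$, and hence $n \leq 3m$.
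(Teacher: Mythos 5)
Your proposal follows essentially the same route as the paper: the same set $S = \{P_0\} \cup \{L_{3j}\}_{j=0}^{\lceil n/3\rceil - 2}$, the same commutativity check via the relations of \Cref{commutativity}, and the same conclusion from the rank-$m$ bound on maximal sets of independent commuting Paulis. The commutativity part of your argument is fine as stated.

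The one place where you diverge --- and where your plan is not yet a proof --- is the independence step. Disjointness of the index pairs $\{3j, 3j\oplus 1\}$ does not by itself make the expansion ``rigid'': a relation such as $L_0 L_3 \cdots = \pm P_0^{\alpha}$ is a single multiplicative relation among up to $2\lceil n/3\rceil - 1$ cycle Paulis, and it does not isolate any individual $P_i$ as $\pm I$ or as a product of others in a canonical way. To refute such a relation you must exhibit a \emph{witness} operator that commutes with one side and anticommutes with the other, and finding one for every nontrivial subset product is exactly the enumeration you flag as the anticipated obstacle. The paper closes this gap without any enumeration: it shows that $L_{i+1}$ commutes with every element of $\{P_0, L_0, L_3, \ldots, L_{i-3}\}$ (by relations 3 and 4) yet anticommutes with $L_i$ (relation 1), so $L_i$ cannot lie in the group generated by the preceding elements --- a triangularity argument that handles all subset relations at once. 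The only index for which this witness fails is the last one, $L_{n-3}$ (since $L_{n-2}$ fails to commute with $P_0$ and $L_0$), which is why that element is discarded, leaving exactly $\lceil n/3\rceil$ independent commuting Paulis. I would recommend replacing your case analysis with this witness argument; as written, your independence step is an unverified plan rather than a completed proof.
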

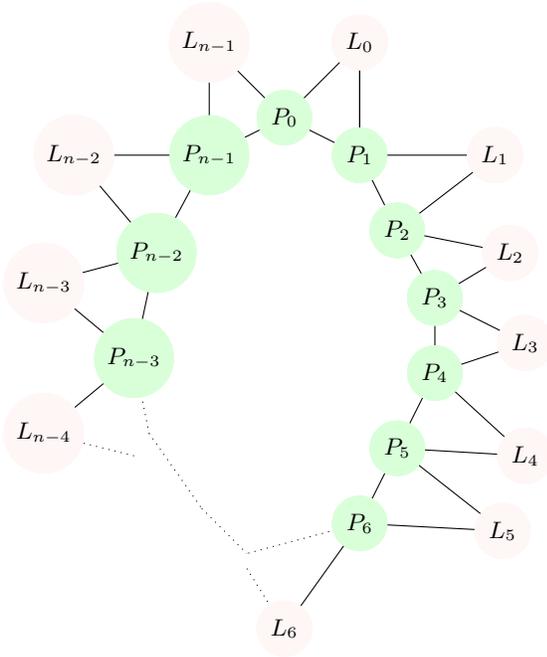
\begin{figure}
\begin{center}
\begin{tikzpicture}
[roundnode1/.style={circle, fill=green!15, minimum size=3mm}, roundnode2/.style={circle, fill=pink!15, minimum size=3mm}]
\draw (0,0)node[roundnode1]{$P_0$}--(1.0,-0.5)node[roundnode1]{$P_1$};
\draw (1.0,-0.5)node[roundnode1]{$P_1$}--(1.5,-1.5)node[roundnode1]{$P_2$};
\draw (1.5,-1.5)node[roundnode1]{$P_2$}--(2.0,-2.4)node[roundnode1]{$P_3$};
\draw (2.0,-2.4)node[roundnode1]{$P_3$}--(2.0,-3.4)node[roundnode1]{$P_4$};
\draw (2.0,-3.4)node[roundnode1]{$P_4$}--(1.5,-4.4)node[roundnode1]{$P_5$};
\draw (1.5,-4.4)node[roundnode1]{$P_5$}--(1.0,-5.4)node[roundnode1]{$P_6$};
\draw [dotted](1.0,-5.4)--(-0.5,-5.8);
%\draw [dotted](-0.5,-5.8)--(-2.5,-5.7);
\draw (0,0)node[roundnode1]{$P_0$}--(-1.0,-0.5)node[roundnode1]{$P_{n-1}$};
\draw (-1.0,-0.5)node[roundnode1]{$P_{n-1}$}--(-1.7,-1.8)node[roundnode1]{$P_{n-2}$};
\draw (-1.7,-1.8)node[roundnode1]{$P_{n-2}$}--(-2.0,-3.2)node[roundnode1]{$P_{n-3}$};
\draw [dotted](-2.0,-3.2)--(-1.8,-4.2);
\draw [dotted](-1.8,-4.2)--(-1.1,-5.2);
\draw [dotted](-1.1,-5.2)--(-0.5,-5.8);
\draw (0,0)node[roundnode1]{$P_0$}--(-1.0,1.0)node[roundnode2]{$L_{n-1}$};
\draw (-1.0,-0.5)node[roundnode1]{$P_{n-1}$}--(-1.0,1.0)node[roundnode2]{$L_{n-1}$};
\draw (-1.0,-0.5)node[roundnode1]{$P_{n-1}$}--(-2.8,-0.5)node[roundnode2]{$L_{n-2}$};
\draw (-1.7,-1.8)node[roundnode1]{$P_{n-2}$}--(-2.8,-0.5)node[roundnode2]{$L_{n-2}$};
\draw (-1.7,-1.8)node[roundnode1]{$P_{n-2}$}--(-3.2,-2.2)node[roundnode2]{$L_{n-3}$};
\draw (-2.0,-3.2)node[roundnode1]{$P_{n-3}$}--(-3.2,-2.2)node[roundnode2]{$L_{n-3}$};
\draw (-2.0,-3.2)node[roundnode1]{$P_{n-3}$}--(-3.2,-4.2)node[roundnode2]{$L_{n-4}$};
\draw [dotted](-2.0,-4.5)--(-3.2,-4.2)node[roundnode2]{$L_{n-4}$};
\draw (0,0)node[roundnode1]{$P_0$}--(1,1)node[roundnode2]{$L_0$};
\draw (1.0,-0.5)node[roundnode1]{$P_1$}--(1,1)node[roundnode2]{$L_0$};
\draw (1.0,-0.5)node[roundnode1]{$P_1$}--(1,1)node[roundnode2]{$L_0$};
\draw (1.0,-0.5)node[roundnode1]{$P_1$}--(2.8,-0.5)node[roundnode2]{$L_1$};
\draw (1.5,-1.5)node[roundnode1]{$P_2$}--(2.8,-0.5)node[roundnode2]{$L_1$};
\draw (1.5,-1.5)node[roundnode1]{$P_2$}--(3.0,-1.8)node[roundnode2]{$L_2$};
\draw (2.0,-2.4)node[roundnode1]{$P_3$}--(3.0,-1.8)node[roundnode2]{$L_2$};
\draw (2.0,-2.4)node[roundnode1]{$P_3$}--(3.2,-3.0)node[roundnode2]{$L_3$};
\draw (2.0,-3.4)node[roundnode1]{$P_4$}--(3.2,-3.0)node[roundnode2]{$L_3$};
\draw (2.0,-3.4)node[roundnode1]{$P_4$}--(3.2,-4.5)node[roundnode2]{$L_4$};
\draw (1.5,-4.4)node[roundnode1]{$P_5$}--(3.2,-4.5)node[roundnode2]{$L_4$};
\draw (1.5,-4.4)node[roundnode1]{$P_5$}--(2.9,-5.5)node[roundnode2]{$L_5$};
\draw (1.0,-5.4)node[roundnode1]{$P_6$}--(2.9,-5.5)node[roundnode2]{$L_5$};
\draw (1.0,-5.4)node[roundnode1]{$P_6$}--(0.0,-6.8)node[roundnode2]{$L_6$};
\draw [dotted](-0.5,-6)--(0.0,-6.8)node[roundnode2]{$L_6$};
\end{tikzpicture}
\caption{An arbitrary, faithful $m$-qubit Pauli realization of an $n$-cycle (green) with the corresponding edge Paulis $L_i=P_{i}P_{i\oplus 1}$ (pink), the dotted lines refers to the Paulis (including edge Paulis) not drawn but are part of the cycle.} 
\label{generic_cycle}
\end{center}
\end{figure}
\begin{proof}
Consider an arbitrary $m$-qubit Pauli realization of an $n$-cycle as shown in Fig.~\ref{generic_cycle}, including the edge Pauli operators. Now, collect together the following Pauli operators: 
$$S \equiv\{P_0,L_0,L_3,L_6,\ldots,L_{n-3}\}$$. By the constraints shown in Section~\ref{commutativity}, all Paulis in $S$ pairwise commute. More precisely, this is because:
\begin{enumerate}
\item The edge Paulis in $S$ all pairwise commute due to condition 3 in Section~\ref{commutativity}.
\item  Each $L_j \in S$ commutes with $P_0$ via the last condition highlighted in Section~\ref{commutativity}.
\end{enumerate}
We now prove that a certain subset of the Pauli operators in $S$ are also independent \footnote{independent means that no Pauli operator can be written as a product of other commuting Paulis}:
First pick all the elements up to $L_i$ starting from $P_0$ from the realization (in the clockwise sense of Fig.~\ref{generic_cycle}) from the set $S$ i.e. $S_i \equiv \{P_0,L_0,L_3,L_6,...L_{i-3},L_i\}$. Consider the edge Pauli $L_{i+1}$. Via conditions 3 and 4 in Section~\ref{commutativity}- $L_{i+1}$ commutes with every element in $S_{i-1}=\{P_0,L_0,L_3,L_6,...,L_{i-3}\}$. By construction $L_i$ also commutes with every element in $S_{i-1}$. Additionally, via condition 1 in Section~\ref{commutativity}, $[L_i,L_{i+1}]\neq 0$. Therefore, $L_i \notin \langle P_0,L_0,L_3,L_6,...,L_{i-3} \rangle$. This is because, if it did then $[L_i,L_{i+1}]= 0$ which we know can not hold true (via condition 1 of Section~\ref{commutativity}). The above argument holds for any $L_i \in S $, except for the last Pauli operator i.e. $L_{n-3}$ . For this case, the Pauli $L_{(n-3)+1}$ does not commute with $P_0,L_0$ due to condition 4 in Section~\ref{commutativity} and hence $L_{n-3}$ cannot be guaranteed to be independent of the rest of the elements in $S$. So, we throw away this element from $S$. This proves that all Paulis in $S \setminus \{L_{n-3}\}$ are independent. We now update $S$: 
$$S = \{P_0,L_0,L_3,L_6,\ldots,L_{3(\lceil\frac{n}{3}\rceil-2)}\}$$
We know that for any given $m$, any maximal subgroup is generated by $m$ independent, pairwise commuting Paulis. Therefore, $|S| \leq m$. We can further write, $S =\{P_0\} \cup \{L_{3j}\}_{j=0}^{\lceil\frac{n}{3}\rceil-2}$. The case where $|S|= m$ means that $\lceil\frac{n}{3}\rceil \leq m$, hence $n \leq 3m$. 

%i.e. $S = \{P_0,L_0,L_3,L_6,\ldots,L_{3(\lceil\frac{n}{3}\rceil-2)}\} = \{P_0\} \cup \{L_{3j}\}_{j=0}^{\lceil\frac{n}{3}\rceil-2}$. This means that $L_{3(k-2)}=L_{m-6} \implies 3(k-2)=m-6 \implies k=3m $. Hence, for a given $m$, the maximum allowed size of a cycle is upper bounded by $3m$.
\end{proof}
Having obtained $3m$ as the upper bound, we further ran subgraph isomorphism tests to test the tightness of this bound. Unlike for $m = 2,3$, when $m = 4$, $3m = 12$ is not a tight upper bound. For this case, all cycles from size 4 to 9 are faithfully realizable. Similarly, when $m = 5$, $3m = 15$ is not a tight upper bound, but all cycles from size 4 to 12 are realizable. 
\section{Possible faithful realizations}
\label{Actualizing faithful realizations}
We now focus our attention on some constructions that provide faithful realizations of Pauli operators over cycle scenarios. We will now provide a few constructions 
that provide faithful realizations, in accordance with the no-go theorem. But before doing so, we highlight a trivial but useful fact via the following proposition.
%\subsection{Subgraph isomorphism test results}
%\label{sub_iso_test}
%To test the tightness of the upper bound obtained in Theorem~\ref{no_go_theorem}, we ran induced subgraph isomorphism test on full Pauli compatibility graph, for different values of $m$. As mentioned above, for $m = 2$, we only observe cycles up to size 6, matching the upper bound of our no-go theorem. For $m=3$, we observe a 9-cycle and match the upper bound of our no-go theorem, although surprisingly an 8-cycle was missing, while all others were realized. For $m=4$, all cycles up to size 9 exist, but none beyond, providing the first example that does not match our no-go theorem's upper bound. For $m=5$, all cycles up to size 12 were found. 
\begin{proposition}
\label{proposition}
    If an $n$-cycle is realizable by $m$-qubit Pauli operators, it is also realizable by ($m+1$)-qubit Pauli operators.
\end{proposition}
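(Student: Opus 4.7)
The plan is essentially a one-line construction. Given a faithful $m$-qubit Pauli realization $\{P_0, P_1, \ldots, P_{n-1}\}$ of the $n$-cycle, I would define the $(m+1)$-qubit operators $P'_i \defeq P_i \otimes I$, where $I$ is the single-qubit identity. Each $P'_i$ is manifestly an element of $\mathcal{P}_{m+1}$, and the assignment $i \mapsto P'_i$ is the candidate realization to verify.

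The key step is to check that faithfulness is preserved, i.e.~that commutation relations among the $P'_i$ are identical to those among the $P_i$. This follows from the tensor product identity $(P_i \otimes I)(P_j \otimes I) = (P_i P_j) \otimes I$, which gives
\begin{equation}
[P'_i, P'_j] = [P_i, P_j] \otimes I \text{.}
\end{equation}
Hence $P'_i$ and $P'_j$ commute if and only if $P_i$ and $P_j$ do. Since $\{P_i\}_{i=0}^{n-1}$ was assumed to faithfully realize the $n$-cycle, so do the $\{P'_i\}_{i=0}^{n-1}$, establishing the proposition.

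There is no real obstacle here: the argument is purely structural and exploits the fact that padding with identity operators is a fully faithful embedding $\mathcal{P}_m \hookrightarrow \mathcal{P}_{m+1}$ of the commutation graph. This is why the authors describe the statement as trivial; its usefulness lies in letting later constructions focus on the smallest $m$ that suffices for a given cycle size without worrying about larger $m$.
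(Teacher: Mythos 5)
Your construction is correct and is essentially the paper's own proof: the paper likewise appends an identity on a fresh qubit to each operator and notes that this preserves the (anti)commutation relations, hence faithfulness. Your write-up is if anything slightly more careful, since you make explicit the identity $[P_i \otimes I, P_j \otimes I] = [P_i,P_j]\otimes I$ that justifies why both commutation and anticommutation are preserved.
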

\begin{proof}
    Given a $m$-qubit Pauli realization of an $n$-cycle, one can append a new qubit to each such Pauli operator and assign operator $I$ or the same Pauli operator corresponding to the added operator. This produces a $m+1$-qubit Pauli realization for the $n$-cycle.
\end{proof}
%Shown in Figure....
%Data
\subsection{$m+2$ cycle with $m$-qubit Paulis}
\label{m+2 construction}
Before presenting our construction, we highlight a relevant construction from Ref.~\cite{abramsky2024commutation}. In this work, the authors provided a general construction to realize any compatibility graph via qudit Pauli operators. Their construction applied to the cycle scenarios and qubit Pauli operators guarantees a realization of $m$-cycle using $m$ qubit Pauli operators. Below we present the construction in Ref.~\cite{abramsky2024commutation}, to then provide ours which is an improvement on it: 
\begin{equation}
\begin{split}
XIII\ldots III\\
IXII\ldots III\\
ZIXI\ldots III\\
ZZIX\ldots III\\
ZZZI\ldots III\\
\ldots\\
ZZZZ\ldots IXI\\
IZZZ\ldots ZIX\\
\end{split}
\end{equation}
Here, starting from the first $m$-qubit Pauli $XIII\ldots III$, and then shifting the first single qubit Pauli operator $X$ to the second qubit, we keep adding a new single qubit $Z$ operator from the left, at the first qubit, while fixing the first $m-1$ Pauli operators of the previous $m$-qubit Pauli to its right. This ensures that a given Pauli is only compatible with the Pauli operator right before it, while being incompatible with the rest. Following this strategy one reaches a point with a path realization of $m-1$, $m$-qubit Pauli operators with the last Pauli being $ZZZZ\ldots IXI$. At this point, the pattern of appending single qubit $Z$ Pauli operator from the left is broken, and in the final step, a single qubit $I$ is introduced at the left while using the first $m-1$ Pauli operators of the previous $m$-qubit Pauli to its right. This ensures that this $m$-qubit Pauli operator is compatible with the previous one and the first $m$-qubit Pauli operators only. Overall, this construction from Ref.~\cite{abramsky2024commutation} guarantees an $m$-qubit Pauli realization of $m$-cycle. For a visual proof, one can track down how the single qubit $X$ operator in the construction shifts, from first qubit to the $m$th qubit, while the $Z$ operator is appended except at the last Pauli.

Our improvement on this construction gurantees an $m+2$ sized cycle using $m$-qubit Pauli operators. Basically, we intervene at the final step of the previous construction, where instead of appending a single qubit $I$ operator at the first qubit, we append a single qubit $Z$ operator at the first qubit to attain $ZZZZ\ldots ZIX$ and then do the same to attain $ZZZZ\ldots ZZI$. Now, we apply the last step in the previous construction and obtain $IZZZ\ldots ZZZ$, an $m$-qubit Pauli operator that commutes with the first and the Pauli operator right before it while anti-commuting with the rest. Overall, the construction is as follows:
\begin{equation}
%\begin{comment}
\begin{split}
XIII\ldots III\\
IXII\ldots III\\
ZIXI\ldots III\\
ZZIX\ldots III\\
ZZZI\ldots III\\
\ldots \\
ZZZZ\ldots IXI\\
ZZZZ\ldots ZIX\\
ZZZZ\ldots ZZI\\
IZZZ\ldots ZZZ
\end{split}
\end{equation}
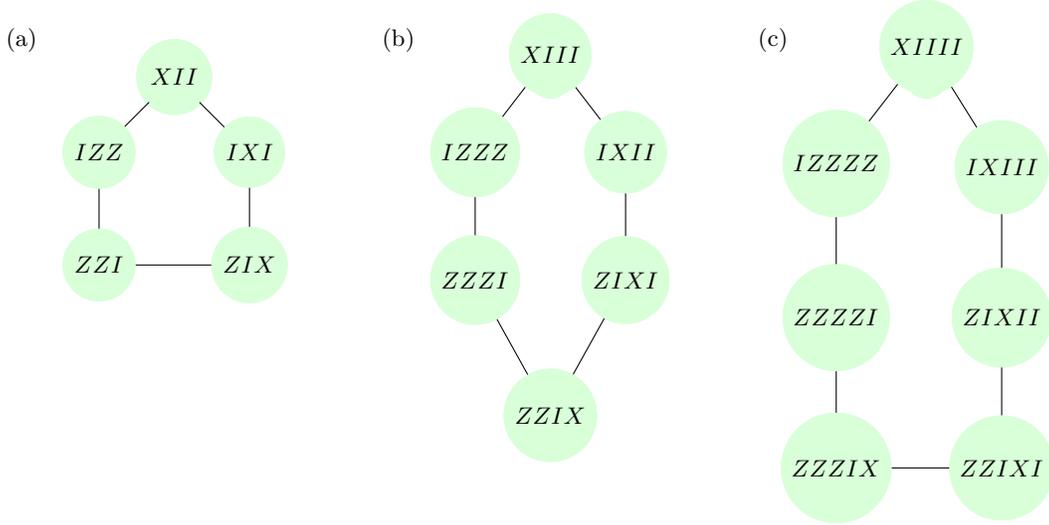
\begin{figure*}
\begin{center}
\begin{tikzpicture}
[roundnode1/.style={circle, fill=green!15, minimum size=05mm}, roundnode2/.style={circle, fill=pink!15, minimum size=5mm}]
\node[text width=0.4cm] at (-2,0.5) 
    {(a)};
\begin{scope}
\node[roundnode1]{$XII$};
\node[roundnode1]{$IXI$};
\node[roundnode1]{$ZIX$};
\node[roundnode1]{$ZZI$};
\node[roundnode1]{$IZZ$};
\node[roundnode1]{$P_5$};
\node[roundnode1]{$P_6$};
%\node[roundnode2]{$X_0$};
%\node[roundnode2]{$X_4$};
\draw (0,0)node[roundnode1]{$XII$}--(1.0,-1)node[roundnode1]{$IXI$};
\draw (1.0,-1)node[roundnode1]{$IXI$}--(1,-2.5)node[roundnode1]{$ZIX$};
\draw (1,-2.5)node[roundnode1]{$ZIX$}--(-1,-2.5)node[roundnode1]{$ZZI$};
\draw (-1,-2.5)node[roundnode1]{$ZZI$}--(-1,-1)node[roundnode1]{$IZZ$};
\draw (-1,-1)node[roundnode1]{$IZZ$}--
(0,0)node[roundnode1]{$XII$};
\end{scope}

\begin{scope}[xshift=5cm]
\node[text width=0.4cm] at (-2,0.5) 
    {(b)};
%[roundnode1/.style={circle, fill=green!15, minimum size=5mm}, roundnode2/.style={circle, fill=pink!15, minimum size=5mm}]
\node[roundnode1]{$1$};
\node[roundnode1]{$2$};
\node[roundnode1]{$3$};
\node[roundnode1]{$4$};
\node[roundnode1]{$5$};
\node[roundnode1]{$6$};
%\node[roundnode1]{$7$};
%\node[roundnode2]{$8$};
\draw (0,0.3)node[roundnode1]{$XIII$}--(1.0,-1)node[roundnode1]{$IXII$};
\draw (1.0,-1)node[roundnode1]{$IXII$}--(1,-2.7)node[roundnode1]{$ZIXI$};
\draw (1,-2.7)node[roundnode1]{$ZIXI$}--(0,-4.5)node[roundnode1]{$ZZIX$};
\draw (-0,-4.5)node[roundnode1]{$ZZIX$}--(-1,-2.7)node[roundnode1]{$ZZZI$};
\draw (-1,-2.7)node[roundnode1]{$ZZZI$}--
(-1,-1)node[roundnode1]{$IZZZ$};
\draw (-0,0.3)node[roundnode1]{$XIII$}--
(-1,-1)node[roundnode1]{$IZZZ$};
\end{scope}
\begin{scope}[xshift=10cm]
\node[text width=0.4cm] at (-2,0.5) 
    {(c)};
%[roundnode1/.style={circle, fill=green!15, minimum size=5mm}, roundnode2/.style={circle, fill=pink!15, minimum size=5mm}]
\node[roundnode1]{$1$};
\node[roundnode1]{$2$};
\node[roundnode1]{$3$};
\node[roundnode1]{$4$};
\node[roundnode1]{$5$};
\node[roundnode1]{$6$};
\node[roundnode1]{$7$};
%\node[roundnode2]{$8$};
\draw (0,0.4)node[roundnode1]{$XIIII$}--(1.0,-1.2)node[roundnode1]{$IXIII$};
\draw (1.0,-1.2)node[roundnode1]{$IXIII$}--(1,-3.2)node[roundnode1]{$ZIXII$};
\draw (1,-3.2)node[roundnode1]{$ZIXII$}--(1,-5.2)node[roundnode1]{$ZZIXI$};
\draw (1,-5.2)node[roundnode1]{$ZZIXI$}--(-1.2,-5.2)node[roundnode1]{$ZZZIX$};
\draw (-1.2,-5.2)node[roundnode1]{$ZZZIX$}--
(-1.2,-3.2)node[roundnode1]{$ZZZZI$};
\draw (-1.2,-3.2)node[roundnode1]{$ZZZZI$}--
(-1.2,-1.15)node[roundnode1]{$IZZZZ$};
\draw (0,0.4)node[roundnode1]{$XIIII$}--
(-1.2,-1.15)node[roundnode1]{$IZZZZ$};
\end{scope}
\end{tikzpicture}
\caption{Realization of $m+2$ sized cycle by $m$-qubit Paulis (a) $m = 3$ (b) $m = 4$ (c) $m = 5$.} 
\label{m+2_with_m}
%This also exemplifies how in a 4-cycle neighbouring edge Paulis don't but opposite ones do commute.
\end{center}
\end{figure*}
%Note that this construction means that $C_{m+1}$ is faithfully realizable by $(m-1)$-qubit Pauli operators. Proposition~\ref{proposition} then implies that $C_{m+1}$ is also faithfully realizable by $m$-qubit Pauli operators.
Fig.~\ref{m+2_with_m} shows specific examples of this construction. Inductively, Proposition~\ref{proposition} and this construction guarantees faithful realizations of all cycles up to size $m+2$ by $m$-qubit Pauli operators. This is because, by the above construction, single ($m = 1$) qubit Paulis can realize a 3-cycle, two ($m = 2$) qubit Pauli operators can realize a 4-cycle, similarly three ($m = 3$) qubit Pauli operators realize a 5-cycle and so on, and Proposition~\ref{proposition} guarantees that $m$-qubit Pauli operators realize all cycles realizable by $k$-qubits $\forall\; 1 \leq k \leq m-1$. 

\subsection{Converting Path graph realizations to cycle realizations}
Before presenting our next construction with which we almost double the number of cycles we get compared to the previous constructions, we need a few results about faithful Pauli realizations of path graphs and their conversion to faithful cycle realizations. We capture these results in the form of theorems and corollaries.
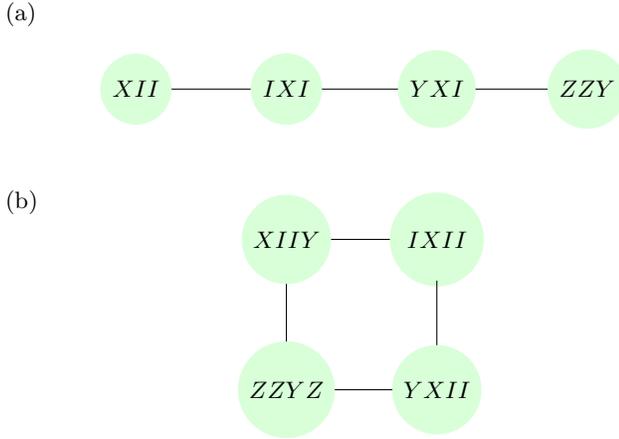
\begin{figure}
\begin{center}
\begin{tikzpicture}
[roundnode1/.style={circle, fill=green!15, minimum size=05mm}, roundnode2/.style={circle, fill=pink!15, minimum size=5mm}]
\node[text width=0.4cm] at (-11.5,1) 
    {(a)};
\node[text width=0.4cm] at (-11.5,-1.5) 
    {(b)};
\begin{scope}
%\node[roundnode2]{$X_0$};
%\node[roundnode2]{$X_4$};
\draw (-10,0)node[roundnode1]{$XII$}--(-8,0)node[roundnode1]{$IXI$};
\draw (-8,0)node[roundnode1]{$IXI$}--(-6,0)node[roundnode1]{$YXI$};
\draw (-6,0)node[roundnode1]{$YXI$}--(-4,0)node[roundnode1]{$ZZY$};
\end{scope}
\begin{scope}[yshift=-2cm]
%\node[roundnode2]{$X_0$};
%\node[roundnode2]{$X_4$};
\draw (-8,0)node[roundnode1]{$XIIY$}--(-6,0)node[roundnode1]{$IXIII$};
\draw (-6,0)node[roundnode1]{$IXII$}--(-6,-2)node[roundnode1]{$YXII$};
\draw (-6,-2)node[roundnode1]{$YXII$}--(-8,-2)node[roundnode1]{$ZZYZ$};
\draw (-8,-2)node[roundnode1]{$ZZYZ$}--(-8,0)node[roundnode1]{$XIIY$};
\end{scope}
\end{tikzpicture}
\caption{Converting a path graph into a cycle by addition of a qubit (a) a 3-qubit path graph realization of $H_4$ (b) Adding a qubit and assigning anti-commuting Pauli operators, $Y$ and $Z$, to the first and last nodes and $I$ to the rest to get a 4-qubit realization of $C_4$.} 
\label{path_to_cycle_2_qubits}
\end{center}
\end{figure}
\begin{theorem}
\label{Theorem2}
If a path graph with $l$ nodes is faithfully realizable by $m$-qubit Pauli operators, then one can always extend it to a faithful realisation of $l$-cycle using $(m+1)$-qubit Pauli operators. 
\end{theorem}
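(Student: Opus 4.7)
The plan is to take the given faithful $m$-qubit Pauli realization $\{P_i\}_{i=1}^{l}$ of $H_l$ and extend it to $(m+1)$-qubit Paulis by appending one extra qubit whose content depends on the vertex's position in the path: the single-qubit identity $I$ in the interior, and a pair of anti-commuting single-qubit Paulis at the two endpoints. Concretely, picking any anti-commuting pair such as $X, Y$, I would set $\tilde{P}_1 \defeq P_1 \otimes X$, $\tilde{P}_l \defeq P_l \otimes Y$, and $\tilde{P}_i \defeq P_i \otimes I$ for each interior index $2 \le i \le l-1$. The goal is then to verify that $\{\tilde{P}_i\}_{i=1}^{l}$ is a faithful $(m+1)$-qubit Pauli realization of $C_l$.

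The verification hinges on the elementary fact that the commutation sign of two tensor-product Paulis factorizes as the product of per-qubit signs. For any pair $\tilde{P}_i, \tilde{P}_j$ in which at least one index lies in $\{2, \dots, l-1\}$, the extra qubit contributes $+1$ (since at least one tensor factor there is $I$), so commutativity is inherited verbatim from the path realization; these pairs therefore reproduce exactly the path edges of $H_l$, which are precisely the non-closing edges of $C_l$. The crucial case is the endpoint pair $\tilde{P}_1, \tilde{P}_l$: since $P_1$ and $P_l$ are non-adjacent in $H_l$ (assuming $l \ge 3$) they anti-commute, contributing $-1$ on the first $m$ qubits, and $\{X,Y\}=0$ contributes another $-1$ on the last qubit, so the two signs multiply to $+1$. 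Hence $\tilde{P}_1$ and $\tilde{P}_l$ commute, supplying precisely the closing edge of $C_l$. Injectivity is then immediate: the interior operators are pairwise distinct by faithfulness of the original realization, while $\tilde{P}_1$ and $\tilde{P}_l$ differ from every interior operator in the last qubit, and from each other there as well.

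The main conceptual observation --- and essentially the only non-routine step --- is this sign-flip at the endpoints: the carefully chosen anti-commuting pair exactly compensates for the anti-commutation of $P_1$ and $P_l$, converting the unique ``missing edge'' of $H_l$ into the closing edge of $C_l$ without disturbing any other relation. Beyond this, the argument is routine Pauli bookkeeping, so I do not anticipate a serious obstacle. I would also remark that the concrete choice $(X, Y)$ is immaterial: any two anti-commuting single-qubit Paulis would work, matching the flexibility illustrated by the choice $(Y, Z)$ in Fig.~\ref{path_to_cycle_2_qubits}(b).
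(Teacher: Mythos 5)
Your proposal is correct and follows essentially the same route as the paper's proof: append $I$ on a fresh qubit to the interior vertices and a fixed anti-commuting single-qubit pair to the two endpoints, so that the per-qubit sign factorization flips only the endpoint relation from anti-commuting to commuting. The paper uses the pair $(X,Y)$ exactly as you do, so there is nothing further to compare.
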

\begin{proof}
    Let $H_l$ be a path graph with $l$ nodes where each node is some $m$-qubit Pauli operator. Furthermore, let the $m$-qubit Pauli operator on the $i$th node of graph $H_l$ be denoted as $P_i$. Since it is a faithful realization, Paulis $P_1$ and $P_l$ anti-commute. Recall that this means there are odd number of qubits corresponding to which Pauli operators across $P_1$ and $P_l$ anticommute. Consider the following $l$, ($m+1$)-qubit Pauli operators:  $XP_1, IP_2,\ldots,IP_{l-1},YP_l$. In words, corresponding to the ($m+1$)th qubit we introduce Pauli $X$ for $P_1$, Pauli $Y$ for $P_l$ while appending $I$ for all $P_i$ where $i \in [2,l-1]$. This leads to a $l$-cycle faithful realization by these Pauli operators because it preserves all commutation and anti-commutation relations as in $H_l$ except between $XP_1$ and $YP_l$ which now commute due to even number of anti-commutating single qubit Pauli operators across these two $(m+1)$-qubit Pauli operators, see Fig.~\ref{path_to_cycle_2_qubits} for an example.
\end{proof}
\begin{corollary}
\label{corollary3}
    If a path graph with $l$ nodes is faithfully realizable by $m$-qubit Pauli operators, then one can always construct a faithful realization of $k$-cycle using $(m+1)$-qubit Pauli operators, $\forall \; 3 \leq k \leq l$. 
\end{corollary}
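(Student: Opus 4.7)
The plan is to reduce this corollary directly to Theorem~\ref{Theorem2} by passing to sub-path realizations. Given a faithful $m$-qubit Pauli realization $P_1, P_2, \ldots, P_l$ of the path graph $H_l$, and any $k$ with $3 \leq k \leq l$, I would select $k$ consecutive Paulis of this realization and argue that, viewed as labels of the corresponding induced sub-path $H_k$, they constitute a faithful $m$-qubit realization of $H_k$.

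The key (and essentially only) lemma I would formalize is the inheritance of faithfulness under induced sub-paths. Concretely, for any $1 \leq i \leq l - k + 1$, consider the ordered tuple $(P_i, P_{i+1}, \ldots, P_{i+k-1})$. Two Paulis $P_j, P_{j'}$ from this sub-tuple commute in $\mathcal{P}_m$ if and only if the original vertices $j, j'$ of $H_l$ coincide or are adjacent (by faithfulness of the original realization). Since the restriction of $H_l$ to $k$ consecutive vertices is exactly the path $H_k$, the same adjacency condition characterizes edges in $H_k$. Hence the sub-tuple faithfully realizes $H_k$ by $m$-qubit Paulis.

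Having produced a faithful $m$-qubit realization of $H_k$, I would then invoke Theorem~\ref{Theorem2} to extend it to a faithful $(m+1)$-qubit realization of the cycle $C_k$, by appending a single-qubit $X$ to the first Pauli in the sub-path, a $Y$ to the last, and the identity $I$ to all intermediate ones. This yields the desired $C_k$ realization, completing the proof.

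There is essentially no hard step here: faithfulness of the sub-realization is immediate from the definitions, and the conversion from path to cycle is already handled by Theorem~\ref{Theorem2}. The only points that require attention are the boundary cases of the range $3 \leq k \leq l$: the lower bound $k \geq 3$ ensures $C_k$ is a well-defined simple cycle (so that Theorem~\ref{Theorem2} applies), while the upper bound $k \leq l$ ensures that a sub-path of length $k$ exists within $H_l$. Both are precisely the hypotheses of the corollary, so no obstacle arises.
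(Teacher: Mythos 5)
Your proposal is correct and follows essentially the same route as the paper: restrict to an induced sub-path of $k$ consecutive vertices, note that faithfulness is inherited, and apply Theorem~\ref{Theorem2} to close the sub-path into a $(m+1)$-qubit realization of $C_k$. The paper's proof is just a more terse statement of exactly this argument.
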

\begin{proof}
Any $m$-qubit faithful realization of a path graph with $l$ nodes contains a faithful $m$-qubit realization of each of its sub-paths 
with $3$ to $l$ nodes. Applying Theorem~\ref{Theorem2} to each such sub path proves our claim. 
\end{proof}
In Section~\ref{Impossible realizations}, we mentioned that for $m = 3$, one can faithfully realize all cycles from size 4 to 9 except the 8-cycle. Via proposition~\ref{proposition}, 4-qubit Pauli operators realize all the cycles realised by 3-qubit Pauli operators. Additionally, we guarantee that 4-qubit Pauli operators also faithfully realize the 8-cycle. This is because
the 3-qubit Pauli realization corresponding to $C_9$ has an induced path graph with 8 nodes, which via Theorem~\ref{Theorem2} guarantees our claim. More generally, Theorem~\ref{Theorem2} guarantees that if the maximum size of the faithfully realized cycle by a given $m$-qubit Pauli operators is $k$ ($\leq 3m$), then all cycles from size $3$ to $k$ will be necessarily faithfully realized by $(m+1)$-qubit Pauli operators. 

We now present a result that gives us the power to combine two faithful Pauli realizations of path graphs. We capture this precisely in the following theorem.
\begin{figure*}
\begin{center}
\begin{tikzpicture}
[roundnode1/.style={circle, fill=green!15, minimum size=05mm}, roundnode2/.style={circle, fill=pink!15, minimum size=5mm}]
\node[text width=6cm] at (3.5,0) 
    {$\rightarrow$ $m$-qubit Pauli realization of $H_l$ ($S_1$)};
\node[text width=6cm] at (-5.2,-1.5) 
    {$m'$-qubit Pauli realization of $H_l'$ ($S_2$) $\leftarrow$};
\draw [decorate,decoration={brace,amplitude=5pt,raise=4ex}]
  (-0.5,-3.5) -- (-7.5,-3.5);
\node[text width=10cm] at (-2.2,-4.8) 
    {$(m+m')$-qubit Pauli realization of $H_{l+l'-2}$}; 
\begin{scope}
%\node[roundnode1]{$P_5$};
%\node[roundnode1]{$P_6$};
%\node[roundnode2]{$X_0$};
%\node[roundnode2]{$X_4$};
\draw (-10,0)node[roundnode1]{$P_1$}--(-8,0)node[roundnode1]{$P_2$};
\draw (-8,0)node[roundnode1]{$P_2$}--(-6,0)node[roundnode1]{$---$};
\draw (-6,0)node[roundnode1]{$---$}--(-4,0)node[roundnode1]{$P_{l-2}$};
\draw (-4,0)node[roundnode1]{$P_{l-2}$}--(-2,0)node[roundnode1]{$P_{l-1}$};
\draw (-2,0)node[roundnode1]{$P_{l-1}$}--(0,0)node[roundnode1]{$P_{l}$};
\end{scope}
\begin{scope}
%\node[roundnode1]{$XII$};
%\node[roundnode1]{$IXI$};
%\node[roundnode1]{$ZIX$};
%\node[roundnode1]{$ZZI$};
%\node[roundnode1]{$IZZ$};
%\node[roundnode1]{$P_5$};
%\node[roundnode1]{$P_6$};
%\node[roundnode2]{$X_0$};
%\node[roundnode2]{$X_4$};
\draw (-2,-1.5)node[roundnode1]{$Q_1$}--
(0,-1.5)node[roundnode1]{$Q_2$};
\draw (0,-1.5)node[roundnode1]{$Q_2$}--
(2,-1.5)node[roundnode1]{$---$};
\draw (2,-1.5)node[roundnode1]{$---$}--
(4,-1.5)node[roundnode1]{$Q_{l'-1}$};
\draw (4,-1.5)node[roundnode1]{$Q_{l'-1}$}--
(6,-1.5)node[roundnode1]{$Q_{l'}$};
\end{scope}
\begin{scope}
%\node[roundnode1]{$XII$};
%\node[roundnode1]{$IXI$};
%\node[roundnode1]{$ZIX$};
%\node[roundnode1]{$ZZI$};
%\node[roundnode1]{$IZZ$};
%\node[roundnode1]{$P_5$};
%\node[roundnode1]{$P_6$};
%\node[roundnode2]{$X_0$};
%\node[roundnode2]{$X_4$};
\draw (-10,-3)node[roundnode1]{$P_1I$}--(-8,-3)node[roundnode1]{$P_2I$};
\draw (-8,-3)node[roundnode1]{$P_2I$}--(-6,-3)node[roundnode1]{$---$};
\draw (-6,-3)node[roundnode1]{$---$}--(-4,-3)node[roundnode1]{$P_{l-2}I$};
\draw (-4,-3)node[roundnode1]{$P_{l-2}I$}--(-2,-3)node[roundnode1]{$P_{l-1}Q_1$};
\draw (-2,-3)node[roundnode1]{$P_{l-1}Q_1$}--
(0,-3)node[roundnode1]{$P_lQ_2$};
\draw (0,-3)node[roundnode1]{$P_lQ_2$}--
(2,-3)node[roundnode1]{$---$};
\draw (2,-3)node[roundnode1]{$---$}--
(4,-3)node[roundnode1]{$P_lQ_{l-1}$};
\draw (4,-3)node[roundnode1]{$P_lQ_{l'-1}$}--
(6,-3)node[roundnode1]{$P_lQ_{l'}$};
\end{scope}
\end{tikzpicture}
\caption{Clubbing together a faithful $m$-qubit Pauli realization of path graph $H_l$ with a faithful $m'$-qubit Pauli realization of $H_{l'}$, to obtain a ($m+m'$)-qubit Pauli realization of $H_{l+l'-2}$. This works by introducing $m'$-qubit Identity operators in the first $l-2$ operators of the realization of path graph $H_l$ while introducing the Pauli realization for $H_{l'}$ from the $(l-1)$th node in $H_l$. In the additional $l'-2$ nodes, one then fixes $P_l$ in the first $m$ qubits.} 
\label{Path_graph_combination_2}
\end{center}
\end{figure*} 

\begin{theorem}
\label{Theorem4}
Given a faithful $m$-qubit Pauli realization of a path graph $H_l$, and a faithful $m'$-qubit Pauli realization of path graph $H_{l'}$, one can construct a faithful ($m+m'$)-qubit Pauli realization of path graph $H_{l+l'-2}$.
\end{theorem}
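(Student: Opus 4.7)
The plan is to verify, by direct case analysis on commutators, that the explicit construction depicted in Figure~\ref{Path_graph_combination_2} is a faithful realization. Writing $S_1 = \{P_1,\ldots,P_l\}$ and $S_2 = \{Q_1,\ldots,Q_{l'}\}$ for the input realizations, I would label the new vertices $V_1,\ldots,V_{l+l'-2}$ with $V_i = P_i \otimes I^{\otimes m'}$ for $1 \leq i \leq l-2$, the bridge $V_{l-1} = P_{l-1} \otimes Q_1$, and $V_{l-1+k} = P_l \otimes Q_{k+1}$ for $1 \leq k \leq l'-1$, so that the new assignment has exactly $l+l'-2$ vertices.

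The key algebraic fact to use throughout is that two tensor-product Paulis $A\otimes B$ and $C\otimes D$ commute if and only if the pairs $(A,C)$ and $(B,D)$ have matching commutation behaviour (i.e., either both commute or both anticommute). Given this, the verification splits naturally into regions. If both indices lie in the left block ($\leq l-2$), the right tensor factors are identity, so commutation is inherited directly from $H_l$ on $S_1$. If both indices lie in the right block ($\geq l$), the left factors are the common $P_l$, so commutation is inherited from $H_{l'}$ on $S_2$ after reindexing. If one index lies in each block, the right factors $I$ and $Q_j$ trivially commute, while the left factors $P_i$ (with $i\leq l-2$) and $P_l$ are at distance at least $2$ in $H_l$ and hence anticommute; this correctly yields anticommutation, as required by non-adjacency in $H_{l+l'-2}$.

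The only mildly delicate case is the bridge vertex $V_{l-1} = P_{l-1}\otimes Q_1$. Against any left-block vertex $V_i$, the right factor $Q_1$ commutes with $I$, so the outcome is controlled by the left block: $P_i$ commutes with $P_{l-1}$ precisely when $i = l-2$, which reproduces the unique required adjacency. Against any right-block vertex $V_{l-1+k}$, the left factor $P_{l-1}$ commutes with $P_l$, so the outcome is controlled by the right block: $Q_1$ commutes with $Q_{k+1}$ precisely when $k = 1$, again reproducing the unique required adjacency. Faithfulness finally demands distinctness of the $V_i$, which is immediate since non-adjacent vertices anticommute and so cannot coincide, while adjacent pairs differ in at least one tensor factor by the faithfulness of the inputs.

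The main obstacle is purely organisational: the bridge vertex plays an asymmetric role, inheriting the endpoint structure of $H_l$ on its left factor and the initial structure of $H_{l'}$ on its right factor. Once this asymmetry is spelled out, each remaining case collapses to a one-line invocation of the tensor-product commutation rule together with the faithfulness of $S_1$ and $S_2$, and the conclusion follows.
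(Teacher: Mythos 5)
Your proposal is correct and follows essentially the same route as the paper: it uses the identical construction (identity-padding the first $l-2$ vertices of $H_l$, bridging via $P_{l-1}\otimes Q_1$ and $P_l\otimes Q_2$, then fixing $P_l$ on the left factor while running through $Q_3,\ldots,Q_{l'}$) and verifies faithfulness by the same tensor-product commutation rule, merely organising the check as an explicit case split on index blocks rather than the paper's ``stack $S_2$ under $S_1$'' phrasing.
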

\begin{proof}
    The construction works as follows, see Fig.~\ref{Path_graph_combination_2} for illustration. Consider $P_1,P_2,\ldots,P_l$, in the same order, as the $m$-qubit Pauli realization of $H_l$ and $Q_{1},Q_{2},\ldots,Q_{l'}$, in the same order, as the $m'$-qubit Pauli realization of $H_{l'}$. To each of the first $l-2$, $m$-qubit Pauli operators in the path $H_l$, append a $m'$-qubit $I$ operator, while to the last two operators $P_{l-1}$ and $P_{l}$, append $Q_{1}$, $Q_{2}$ respectively. This construction, let's call it $S_1$, produces a $(m+m')$-qubit faithful realization of path graph $H_l$. Faithful because the appended $m'$-qubit Pauli operators are all pairwise compatible and hence don't perturb the compatiblity relations associated to the initial $m$-qubit Pauli realization of $H_l$. Now consider $l'-2$ copies of the $m$-qubit Pauli $P_l$ and append $Q_{3}, Q_{4}, \ldots, Q_{l'}$ respectively to each of the copies. This construction, let's call it $S_2$, produces a faithful $(m+m')$-qubit Pauli realization of $H_{l'-2}$. We can now stack $S_2$ underneath $S_1$ and show that this leads to a faithful ($m+m'$)-qubit Pauli realization of path graph $H_{l+l'-2}$:
    Anti-commutativity across each node in $S_2$ with the first $l-2$ nodes in $S_1$ is ensured because, for each such node in $S_1$, the $m$-qubit Pauli operator (corresponding to first $m$ qubits) anti-commutes with the corresponding operators of every node in $S_2$,
    while the remaining $m'$-qubit Pauli operator commutes with the corresponding operators of every node in $S_2$. On the other hand, for the last two nodes in $S_1$, the $m$-qubit Pauli operator corresponding to the first $m$ qubits commutes with the corresponding $m$-qubit Pauli operator for every node in $S_2$. This means that the commutativity relationship of the last two nodes in $S_1$ with all of the nodes in $S_2$ is decided by the last $m'$ qubits. All the Pauli operators corresponding to these qubits from $S_1$ to $S_2$ are $Q_{1},Q_{2},\ldots,Q_{l'}$ which is isomorphic to $H_{l'}$. This suggests that $S_2$ stacked underneath $S_1$ leads to a faithful ($m+m'$)-qubit Pauli realization of $H_{l+l'-2}$.
\end{proof}
\begin{corollary}
\label{corollary 5}
    Given a faithful $m$-qubit Pauli realization of a path graph $H_l$, there exists a faithful $(m+1)$-qubit Pauli realization of path graph $H_{l+1}$. 
\end{corollary}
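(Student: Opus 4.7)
The plan is to derive this as a direct instance of Theorem~\ref{Theorem4}. Choosing $m' = 1$ and $l' = 3$ in that theorem, the hypothesis reduces to supplying a faithful single-qubit Pauli realization of the three-vertex path $H_3$, while the conclusion produces exactly a faithful $(m+1)$-qubit realization of $H_{l+l'-2} = H_{l+1}$, which is precisely the desired object.

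What remains is to exhibit a faithful $1$-qubit realization of $H_3$. I would simply take $Q_1 = X$, $Q_2 = I$, $Q_3 = Y$: the consecutive pairs $(X,I)$ and $(I,Y)$ commute because $I$ commutes with every Pauli, whereas the endpoints $X$ and $Y$ anti-commute. This reproduces exactly the commutation pattern of $H_3$, in which the middle vertex is adjacent to both endpoints while the two endpoints are not adjacent to each other, so the realization is faithful.

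Feeding this single-qubit realization into the construction used in the proof of Theorem~\ref{Theorem4} then gives the explicit $(m+1)$-qubit faithful realization of $H_{l+1}$: append the single-qubit operator $I$ to each of $P_1, \dots, P_{l-2}$, append $Q_1 = X$ to $P_{l-1}$ and $Q_2 = I$ to $P_l$, and add one further node formed by appending $Q_3 = Y$ to a new copy of $P_l$, yielding the sequence $P_1 I,\, P_2 I,\, \ldots,\, P_{l-2} I,\, P_{l-1} X,\, P_l I,\, P_l Y$. No real obstacle is expected here: all the commutation bookkeeping has already been carried out inside the proof of Theorem~\ref{Theorem4}, and the corollary simply instantiates it with the smallest non-trivial choice of auxiliary path.
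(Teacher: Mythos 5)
Your proposal is correct and matches the paper's own proof: both instantiate Theorem~\ref{Theorem4} with $m'=1$, $l'=3$ and the single-qubit realization $X$--$I$--$Y$ of $H_3$, yielding $H_{l+l'-2}=H_{l+1}$ on $m+1$ qubits. The only difference is that you additionally write out the resulting sequence explicitly, which is a harmless elaboration.
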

\begin{proof}
    This follows from Theorem~\ref{Theorem4} where we select $m' = 1$ which gives a faithful realization of $H_{3}$ (i.e. $l' = 3$). One example of this realization is $X-I-Y$. With the faithful $m$-qubit Pauli realization of a path graph $H_l$, it will produce a faithful $(m+1)$-qubit Pauli realization of path graph $H_{l+1}$.  
\end{proof}
\begin{figure*}
\begin{center}
\begin{tikzpicture}
[roundnode1/.style={circle, fill=green!15, minimum size=05mm}, roundnode2/.style={circle, fill=pink!15, minimum size=5mm}]
\begin{scope}
\node[roundnode1]{$XII$};
\node[roundnode1]{$IXI$};
\node[roundnode1]{$ZIX$};
\node[roundnode1]{$ZZI$};
\node[roundnode1]{$IZZ$};
\node[roundnode1]{$P_5$};
\node[roundnode1]{$P_6$};
%\node[roundnode2]{$X_0$};
%\node[roundnode2]{$X_4$};
\draw (-10,0)node[roundnode1]{$XII$}--(-8,0)node[roundnode1]{$IXI$};
\draw (-8,0)node[roundnode1]{$IXI$}--(-6,0)node[roundnode1]{$YXI$};
\draw (-6,0)node[roundnode1]{$YXI$}--(-4,0)node[roundnode1]{$ZZY$};
\draw (-4,0)node[roundnode1]{$ZZY$}--(-2,0)node[roundnode1]{$YZX$};
\draw (-2,0)node[roundnode1]{$YZX$}--
(0,0)node[roundnode1]{$YZI$};
\draw (0,0)node[roundnode1]{$YZI$}--
(2,0)node[roundnode1]{$YZY$};
\draw (2,0)node[roundnode1]{$YZY$}--
(4,0)node[roundnode1]{$YYX$};
\end{scope}
\end{tikzpicture}
\caption{A faithful 3-qubit Pauli realization of a $H_8$.} 
\label{8-cycle_realization}
\end{center}
\end{figure*}
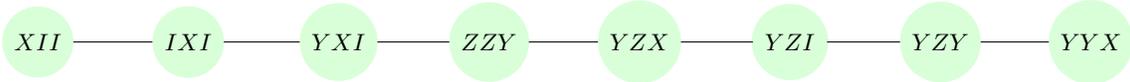
\subsection{$2m$ or a $2m-1$ sized cycle with $m$-qubit Paulis}
We are now ready to provide our construction. Given a faithful $m$-qubit Pauli realization of a path graph $H_l$ and repeated application of Theorem~\ref{Theorem4} for $p\;  (\geq 0)$ rounds, where we keep using the faithful $m'$-qubit Pauli realization of path graph $H_{l'}$ in each round, gives us a faithful $(m+pm')$-qubit Pauli realization of path graph $H_{l+p(l'-2)}$. This allows us to construct faithful realizations of cycles for arbitrary number of qubits. Consider the case when $m = m' = 3$ and $l = l'= 8$. Fig.~\ref{8-cycle_realization} shows a faithful realization of $H_8$ with three qubits. Starting from this realization, consider $p$ repeated applications of Theorem~\ref{Theorem4}. This provides a faithful $M$-qubit Pauli realization of path graph $H_{L}$, where $M = 3+3p$ and $L = 8 + 6p$. Based on this construction, we now define three cycle constructions as follows:
\begin{enumerate}
    \item Application of Theorem~\ref{Theorem2} to this construction gives us a $L$-cycle using $M' = M+1$ qubits. Note that $L = 2M'$. %In other words,
    \item Application of corollary~\ref{corollary 5} to this construction gives us a faithful $(M+1)$-qubit path realization of $H_{L+1}$. Further applying Theorem~\ref{Theorem2} gives us a faithful $M'$-qubit realization of ${(L+1)}$-cycle where $M' = M+2$. Note that $L + 1 = 2M'-1$.
    \item Application of Theorem~\ref{Theorem4} such that $m' = 2$, and $l'=2$, gives us a faithful $(M+2)$-qubit Pauli realization of path graph $H_{L+3}$. Further applying Theorem~\ref{Theorem2} gives us a faithful $M'$-qubit path realization of ${(L+3)}$-cycle where $M' = M+3$. Note that $L + 3 = 2M'-1$.
\end{enumerate} 
Note that since $M = 3(1+p)$, the cycle realization in construction (1) implies that with number of qubits $M' \equiv 1 $ (mod 3) we can faithfully realize a cycle of size $2M'$ while constructions (2) and (3) imply that with number of qubits $M' \equiv 0 $ (mod 3) or $M' \equiv 2 $ (mod 3), we can faithfully realize a cycle of size $2M'-1$. This means that a faithful Pauli realization of a cycle of size $2M'-1$ or $2M'$ is guaranteed with $M'$ qubits. Corollary~\ref{corollary3} further implies that with $M' + 1$ qubits all cycles from size 3 up to size $2M'-1$ or $2M'$ are faithfully realizable. We can further rephrase this result as follows. With $M'$ qubits all cycles from size 3 up to size $2M' - 3$ or $2M'-2$ can be faithfully realized by Pauli operators depending on whether $M \not\equiv 1$ (mod 3) or $M \equiv 1$ (mod 3) respectively. This leaves open the question of whether there exists a faithful realization of the cycle of size $2M' - 2$ when $M \not\equiv 1$ (mod 3) and $2M'-1$ when $M \equiv 1$ (mod 3). This finishes the descriptions of our constructions.

\section{Contextuality with Pauli cycles}\label{sec:contextuality_pauli}
%Clearly, no cycle realized by Pauli operators can exhibit strong contextuality since the operators over it are independent. This opens door to consider weaker arguments where we derive NC inequalities and check for the existence of quantum states that lead to the violation of at least one of these inequalities. 
We now focus on the contextuality witnessing aspects of the Pauli realizations of $n$-cycle scenarios.
Luckily, the commutation properties in Section~\ref{commutativity} turn out to be enough for this characterisation. The first step in every proof below takes inspiration from Tsirelson's original idea of squaring the operator to derive quantum bound for the CHSH inequality. We dedicate these proofs to his contribution to the field of quantum foundations.

For this analysis, we make use of the complete characterization of the non-contextual polytopes of $n$-cycle scenarios provided in Ref.~\cite{araujo2013all}:
$$ \sum_{i=0}^{n-1}\gamma_{i}\langle A_{i}A_{i\oplus 1}\rangle \overset{\textit{NCHV}}{\leq} n- 2 $$ where $\gamma_{i} \in \{-1,+1\}$ with $\prod_{i=0}^{n-1} \gamma_i = -1$ and $\{A_i\}_{i=0}^{n-1}$ are the measurements labelling each vertex. We start with a few instantiations of particular cycle cases.

\subsection{The $4$-cycle scenario}
\label{n=4}

We start off by constructing operators corresponding to the inequalities of the 4-cycle:
$$ \sum_{i=0}^{3}\gamma_{i}\langle A_{i}A_{i\oplus 1}\rangle \overset{\textit{NCHV}}{\leq} 2 $$
The relevant quantum operator form of the LHS in terms of $m$-qubit Pauli operators is as follows:
$$\langle \Gamma \rangle = \gamma_{0}\langle P_{0}P_{1}\rangle + \gamma_{1}\langle P_{1}P_{2}\rangle + \gamma_{2}\langle P_{2}P_{3}\rangle + \gamma_{3}\langle P_{3}P_{4}\rangle $$
Remember that we introduced the edge Paulis previously, and so the operator $\Gamma$ can be written in terms of them as follows:
$$\Gamma = \gamma_{0}L_{0}+\gamma_{1}L_{1}+\gamma_{2}L_{2}+\gamma_{3}L_{3}$$
The condition for contextuality of the Paulis then becomes:
$$\langle \Gamma \rangle > 2 $$
So, if we can find some that $\ket{\Psi}$ that gives the expectation value above 2 means those statistics don't have a non-contextual hidden variable explanation. We square the operator obtained above:
$$\Gamma^{2}=(\gamma_{0}L_{0}+\gamma_{1}L_{1}+\gamma_{2}L_{2}+\gamma_{3}L_{3})^2 $$
For brevity, we write $\Gamma$ as:
$$\Gamma = a + b$$
where $a=\gamma_{0}L_{0}+\gamma_{1}L_{1}, b= \gamma_{2}L_{2}+\gamma_{3}L_{3}$.\\\\
Therefore $\Gamma^2={(a + b)^2}={a}^2+{b}^2+ ab + ba $. Here,
$$a^2=\gamma_{0}^{2}L_{0}^{2} + \gamma_{1}^{2}L_{1}^2+ \gamma_{0}\gamma_{1}\{L_{0},L_{1}\}$$
$$b^2=\gamma_{2}^{2}L_{2}^{2} + \gamma_{3}^{2}L_{3}^2+ \gamma_{2}\gamma_{3}\{L_{2},L_{3}\}$$
$$ab=\gamma_{0}\gamma_{2}\{L_{0},L_{2}\} + \gamma_{0}\gamma_{3}\{L_{0},L_{3}\} $$ 
$$ba= \gamma_{1}\gamma_{2}\{L_{1},L_{2}\} + \gamma_{1}\gamma_{3}\{L_{1},L_{3}\} $$
Using ${\gamma_i}^2=1$ and further the edge Pauli commutation(anti) relations derived in Section~\ref{commutativity}: $\{L_0,L_1\}=\{L_0,L_3\}=0=\{L_2,L_3\}=\{L_2,L_1\}$ and $[L_0,L_2]=0=[L_1,L_3]$, implies that:
$$a^2 = 2I $$
$$b^2= 2I$$
$$ab+ba= 2\gamma_{0}\gamma_{2}(L_{0}L_{2}) + 2\gamma_{1}\gamma_{3}(L_{1}L_{3})$$
$$\Gamma^{2}= 4I + 2\gamma_{0}\gamma_{2}L_{0}L_{2} + 2\gamma_{1}\gamma_{3}L_{1}L_{3} $$
We can further simplify this expression by noting that the commutation relations of the 4-cycle Paulis follow the following property:
$$L_{1}L_{3}=P_1P_2P_3P_0=-P_0P_1P_2P_3=-L_{0}L_{2}$$
This simplifies $\Gamma^2$ as follows:
$$\Gamma^2= 4I + 2L_{1}L_{3}({\gamma_0\gamma_2 - \gamma_1\gamma_3})$$
We know that there are only odd number of indices $i$ s.t. $\gamma_i = -1$, hence, for the 4-cycle, two cases exist: (i) only one index $i$ s.t. $\gamma_i = -1$ (ii) three indices $i$ s.t.$\gamma_i = -1$.
\\
If (i) is true, then:
$$\Gamma^2=4I+4L_{1}L_{3}= 4(I+L_{1}L_{3}) \hspace{1.0cm}  \text{(where $\gamma_{1}$ or $\gamma_{3}$ is -1)}$$
$$\Gamma^2=4I-4L_{1}L_{3}= 4(I-L_{1}L_{3}) \hspace{1.0cm}  \text{(where $\gamma_{0}$ or $\gamma_{2}$ is -1)}$$
If (ii) is true, then:
$$\Gamma^2=4I+4L_{1}L_{3}= 4(I+L_{1}L_{3}) \hspace{1.0cm}  \text{(where $\gamma_{1}$ or $\gamma_{3}$ is 1)}$$
$$\Gamma^2=4I-4L_{1}L_{3}= 4(I-L_{1}L_{3}) \hspace{1.0cm}  \text{(where $\gamma_{0}$ or $\gamma_{2}$ is -1)}$$ We get equivalent conditions from both possibilities. Also, note that $L_{1}L_{3}$ produces some Pauli operator in $\mathcal{P}_m$. Therefore, for every case the maximum eigenvalue of $\Gamma^2$ is 8. The maximum eigenvalue of $\Gamma$ is $\sqrt{8}$ i.e. $+ 2\sqrt{2}$ or $-2\sqrt{2}$. When $\Gamma = -2\sqrt{2}$, for a given inequality, then $\Gamma = 2\sqrt{2}$ for a valid inequality which is a negative multiple of the other one. Moreover, the state corresponding to this maximal violation is an eigenstate of the Pauli $L_{1}L_{3}=-P_0P_1P_2P_3$.
Therefore, each Pauli realization of a 4-cycle maximally violates some 4-cycle NC inequality.

\subsection{The $5$-cycle case}
\label{n=5}
In the 5-cycle case the quantum operator ($\Gamma$) appears as follows:
$$\Gamma = \gamma_{0}L_{0}+\gamma_{1}L_{1}+\gamma_{2}L_{2}+\gamma_{3}L_{3} + \gamma_{4}L_{4}$$ Squaring the operator, gives:
$$\Gamma^2=5I + \sum_{\substack{i=0, k=0 \\ i \neq k}}^{4} \gamma_i\gamma_k\{L_i,L_k\}$$ We know from the commutation constraints highlighted in Section~\ref{commutativity} that for $k=5$, all pair of distinct edge Paulis anti-commute, hence:
$$\Gamma^2 = 5I$$ This means that the maximum eigenvalue of $\Gamma$ is $\sqrt{5} < 3$. Therefore, no $m$-qubit Pauli realization for a 5-cycle produces a contextual behaviour, for any $m$.
%\subsection{The n,6-cycle case}

\subsection{General $n \geq 4$-cycle scenarios}
\label{n>4}
Turns out that we can capture all cases above the 4-cycle scenario in a general manner. We capture this general situation in the form of a theorem.
\begin{theorem}
\label{no_contextuality}
    No $m$-qubit Pauli realization can witness contextuality in an $n > 4$-cycle scenario.
\end{theorem}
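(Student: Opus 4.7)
The plan is to generalise the squaring trick used for the $n=4$ and $n=5$ cases. Setting $\Gamma \defeq \sum_{i=0}^{n-1} \gamma_{i} L_{i}$, the goal is to prove an operator-norm bound $\|\Gamma\| < n-2$: since $\langle \Gamma \rangle \leq \|\Gamma\|$ on every quantum state, this immediately rules out any violation of the facet-defining noncontextuality inequality \eqref{eq:NCineq}, establishing the theorem.

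First I would expand $\Gamma^{2}$. Using $L_{i}^{2} = I$ and $\gamma_{i}^{2} = 1$, one obtains
\[
\Gamma^{2} \;=\; n\, I \;+\; \sum_{0 \leq i < k \leq n-1} \gamma_{i} \gamma_{k} \{L_{i}, L_{k}\}.
\]
Next I would invoke the commutativity structure of edge Paulis derived in \Cref{commutativity}. Conditions 1 and 2 show that edge Paulis at cycle-distance $1$ or $2$ anticommute, while condition 3 (applicable for $n > 5$) shows that all other pairs commute; thus only commuting pairs survive the double sum, each contributing $\{L_{i}, L_{k}\} = 2 L_{i} L_{k}$. A direct enumeration on $C_{n}$ yields exactly $2n$ anticommuting unordered pairs (the $n$ pairs at distance $1$ plus the $n$ pairs at distance $2$), leaving $\binom{n}{2} - 2n = n(n-5)/2$ commuting ones; note that this vanishes for $n=5$, consistently with the earlier analysis, and grows quadratically for larger $n$.

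The final step is a straightforward operator-norm bound. Each product $L_{i} L_{k}$ of two commuting Hermitian Paulis is itself an $m$-qubit Pauli operator and hence has operator norm $1$. By the triangle inequality,
\[
\|\Gamma^{2}\| \;\leq\; n \;+\; 2 \cdot \tfrac{n(n-5)}{2} \;=\; n^{2} - 4n \;<\; (n-2)^{2},
\]
so $\|\Gamma\| < n-2$ and the theorem follows.

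I expect the only real subtlety to be the combinatorial bookkeeping of edge-Pauli commutations, which must be checked carefully for small $n$ (notably $n = 6$, where cycle-distance $3$ equals $n/2$ and contributes only $n/2$ rather than $n$ unordered pairs) so that the count $n(n-5)/2$ of commuting pairs is correctly recovered in every regime. Beyond that, the argument is an elementary operator-norm estimate, hinging on the strictly positive algebraic slack $(n-2)^{2} - (n^{2} - 4n) = 4 > 0$ between the quantum upper bound on $\langle \Gamma \rangle$ and the noncontextuality threshold $n-2$.
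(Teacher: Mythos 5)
Your proposal is correct and follows essentially the same route as the paper: square $\Gamma$, use the edge-Pauli (anti)commutation structure to kill the distance-$1$ and distance-$2$ cross terms, count the $n(n-5)/2$ surviving commuting pairs, and conclude $\|\Gamma\|^2 \leq n^2-4n < (n-2)^2$. The only (cosmetic) differences are that you count the commuting pairs by complementation from $\binom{n}{2}$ rather than by the paper's direct enumeration, and you phrase the final estimate as an operator-norm/triangle-inequality bound rather than a state-by-state expectation bound.
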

\begin{proof}
From the specific examples that we saw above we can notice that once we square the operator $\Gamma$, the only terms that survive among the anti-commutators are the ones where the edge Paulis commute with each other. For any $n \geq 5$ cycle:
$$\Gamma = \sum_{i=0}^{n-1} {\gamma_i}{L_i}$$
The condition for contextuality is $\langle \Gamma \rangle_{\ket{\Psi}} > n-2$.
$$\Gamma^2= kI + \sum_{\substack{i=0, k=0 \\ i \neq k}}^{n-1} \gamma_i\gamma_k\{L_i,L_k\}  $$ As noted in the 5-cycle case, within the summation on the r.h.s the anti-commuting pairs don't contribute whereas each commuting term appears as $2\gamma_i\gamma_k{L_i}{L_k}$. This means that:
$$\Gamma^2= kI+ \sum_{\substack{i=0, k=0 \\ i \neq k}}^{n-1} 2\gamma_i\gamma_k L_iL_k$$
Now we only need the number of such surviving terms to finish our proof. Using Fig.~\ref{generic_cycle} as an illustration, we count as follows: 
\begin{enumerate}
    \item $L_0,L_1,L_2$ each commutes with $n-5$ other edge paulis .
    \item For $L_3$ we need to avoid redundancy and discount any commutation with Paulis in 1. We have  $(n-5)-1 = n-6$ commutations i.e. we excluded one with $L_0$.
    \item For $L_4$ we need to exclude the commutation with $L_0$ \& $L_1$. Hence $(n-5)-2 = n-7$ such relations.
    \item We keep going like this, we reach $L_{n-4}$ where only 1 commutation relation needs to be counted i.e. with ${L_{n-1}}$.
    \item Beyond that every edge Pauli commutation combination has already been counted for in the steps above.
\end{enumerate}
This means that the total unique counts that contribute in the r.h.s (summation part) of $\Gamma^2$ above are:
\begin{equation}
\begin{split}
    3(n-5)+ (n-6) + (n-7) + (n-6) + \\ \ldots + 1 = 2(n-5) + \frac{(n-5)(n-4)}{2}
    \end{split}
\end{equation}
Clearly, this sum only makes sense for $n \geq 5$.
Now, we try to derive an upper bound on the maximum eigenvalue of $\Gamma^2$ operator defined above:
$$\langle \Gamma^2 \rangle _{\ket{\Psi}} = n + \sum_{i \neq k} 2\gamma_i\gamma_k  \langle L_iL_k \rangle_{\ket{\Psi}} $$ If we try to compute the algebraic upperbound of r.h.s above by noting that each term in the summation is $\leq 1$ (can't all together be 1 since all the Paulis across terms don't pairwise commute), Therefore:
$$\langle \Gamma^2 \rangle _{\ket{\Psi}} < n + 2 \left( 2(n-5) + \frac{(n-5)(n-4)}{2} \right) = n^2 - 4n$$ Therefore,
$$0 \leq \text{eig.val.}(\Gamma^2)  < n^2 - 4n $$
where eig.val.($\Gamma^2$) $\equiv$ any eigenvalue of operator $\Gamma^2$. This also means that:
$$-\sqrt {n^2 - 4n} < \text{eig.val.}(\Gamma) < \sqrt {n^2 - 4n}$$
For non-contextuality it must be that for all states $\ket{\Psi}$ in $\mathcal{H}_2^{\otimes m}$:
$$\langle \Gamma \rangle_{\ket{\Psi}} \leq (n-2)$$
%Notice  this upperbound holds independent holds any state ${\ket{\Psi}}$.\\
Hence, condition for non-contextuality means that:
$$\sqrt{n^2 - 4n} < n-2$$
which always holds true.
Hence $\langle \Gamma \rangle_{\ket{\Psi}} < n-2 $ ($\forall$ $n \geq 5$ and $\ket{\Psi})$. This means that the statistics obtained from any faithful $m$-qubit Pauli realization of an $n$-cycle, lies strictly inside the classical (NC) polytope ($\forall$ $ n \geq 5$).
\end{proof}
\section{General compatibility graphs}
\label{Arbitrary_cases}
Having fully characterized Pauli contextuality in $n$-cycle scenarios, we now move our discussion towards more general scenarios. Owing to \Vorobev’s theorem, which states that an arbitrary KS scenario can witness contexuality iff it has an induced $n$-cycle scenario ($n \geq 4$), and our result that only the 4-cycle scenario produces contextual correlations by qubit Pauli operators, we now answer whether the presence of an induced 4-cycle is a necessary and sufficient condition for Pauli contextuality in arbitrary KS scenarios.
\subsection{Implication of 4-cycle contextuality}
The proofs above imply that if an arbitrary compatibility graph has at least one 4-cycle as its induced subgraph, then there always exists a quantum state that produces statistics indescribable by any noncontextual model. We can prove this by the following argument:
Imagine that a graph $G$ has an induced 4-cycle and assume that the statistics corresponding to Pauli realizations of $G$ are all non-contextual. This means that a joint probability distribution (JPD) over the outcomes of measurements realizing $G$ explains all the statistics one can observe over any sub scenario of $G$, by marginalizing over the rest of the graph. But from our result, on 4-cycle Pauli contextuality above, we know that we can find some quantum state $\ket{\Psi}$ for every Pauli realization such that it violates one of the 4-cycle NC inequalities, or its lifted version. 
This means that for each 4-cycle Pauli realization, there always exists statistics that cannot be explained by a JPD over $G$. Hence, the initial assumption is contradicted and graph $G$ can produce contextuality \footnote{More than anything, this proof underlines that adding more Paulis in addition to the 4-cycle in no way nullifies (or decreases) the contextuality witnessing effects of the 4-cycle.} via qubit Pauli operators.

%This leaves open the question for the cases where the graph under consideration contains other k-cycles ($k \geq 5$) as induced subgraphs. An intuitive guess suggests that these graphs don't produce contextuality but this intuition is inherently based on the assumption that if we attach new Paulis to k-cycles ($k \geq 5$) then the new structure can't do anything more than what the cycles can (w.r.t contextuality) or in more simpler words the whole of the graph is a sum of its parts (w.r.t contextuality). We haven't been able to falsify this assumption yet. But, we conjecture that only graphs having presence of 4-cycles as induced subgraphs can produce contextuality.\\
This raises a natural question: Given an arbitrary Pauli compatibility graph, with at least one induced $n$-cycle ($n \geq 5$), can it witness contextuality? For graphs where there is exactly one such cycle, it's not complicated to construct a JPD, but for generic graphs one has to be careful. The same question further raises another important question: What is the precise role of induced cycles within compatibility graphs in producing contextuality? We know that presence of cycles ensures that one can always find some quantum observables such that the whole graph produces at least as much contextuality as any induced cycle \cite{xu2019necessary} whereas for the current case of the Pauli compatibility graph we know that the induced Pauli cycles independently can't produce contextuality. But this still doesn't imply that such a Pauli graph cannot exhibit contextuality. Therefore, we now move on to study Pauli compatibility graphs that are composed combinations of two 5-cycles.
%\newpage
\subsection{Two induced 5-cycles}
We consider three cases where two 5-cycles conjoin together differently.
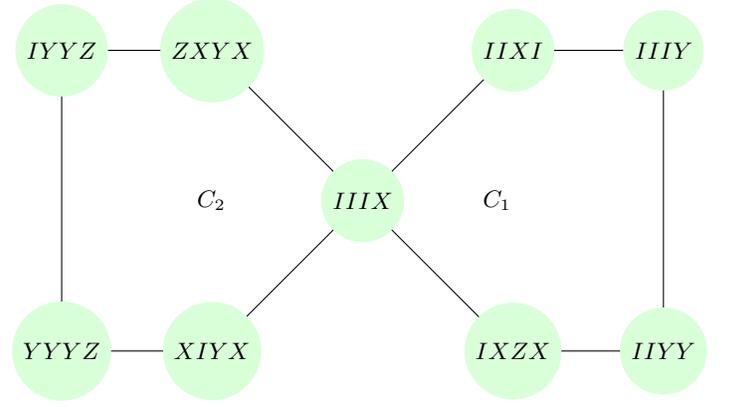
\begin{figure}
\begin{center}
\begin{tikzpicture}
[roundnode1/.style={circle, fill=green!15, minimum size=1mm}]
\node[text width=0.8cm] at (2,0) 
    {$C_1$};
\node[text width=0.4cm] at (-2,0) 
    {$C_2$};
\draw (0,0)node[roundnode1]{$IIIX$}--(-2,2)node[roundnode1]{$ZXYX$};
\draw (-2,2)node[roundnode1]{$ZXYX$}--(-4,2)node[roundnode1]{$IYYZ$};
\draw (-4,2)node[roundnode1]{$IYYZ$}--(-4,-2)node[roundnode1]{$YYYZ$};
\draw (-4,-2)node[roundnode1]{$YYYZ$}--(-2,-2)node[roundnode1]{$XIYX$};
\draw (-2,-2)node[roundnode1]{$XIYX$}--(0,0)node[roundnode1]{$IIIX$};
\draw (0,0)node[roundnode1]{$IIIX$}--(2,2)node[roundnode1]{$IIXI$};
\draw (2,2)node[roundnode1]{$IIXI$}--(4,2)node[roundnode1]{$IIIY$};
\draw (4,2)node[roundnode1]{$IIIY$}--(4,-2)node[roundnode1]{$IIYY$};
\draw (4,-2)node[roundnode1]{$IIYY$}--(2,-2)node[roundnode1]{$IXZX$};
\draw (2,-2)node[roundnode1]{$IXZX$}--(0,0)node[roundnode1]{$IIIX$};
\end{tikzpicture}
\caption{Two 5-cycles joined together at a node : a graph achievable only with $m \geq 4$. No such arrangement of Paulis can ever produce contextuality.}
\label{one_node}
\end{center}
\end{figure}

\textbf{Case I.} Only one node is common as depicted in Figure~\ref{one_node}. Clearly, such a Pauli-graph always has a non-contextual model. This is because a JPD ($p_T$) over it exists:
$$p_T= \frac{p_{C_1}p_{C_2}}{p_{IIIX}}$$ where $p_{C_i}$ is a JPD for $i^{th}$ cycle and ${p_{IIIX}}$ is probability for outcomes of measurement $IIIX$. Due to no-disturbance condition the common measurement $IIIX$ will always admit a unique probability distribution.

\begin{figure}
\begin{center}
\begin{tikzpicture}
[roundnode1/.style={circle, fill=green!15, minimum size=5mm}]
\node[text width=0.8cm] at (1.5,0) 
    {$C_1$};
\node[text width=0.4cm] at (-1.3,0) 
    {$C_2$};
\draw (0,2)node[roundnode1]{$IIX$}--(-2,2)node[roundnode1]{$XZX$};
\draw (-2,2)node[roundnode1]{$XZX$}--(-4,0)node[roundnode1]{$IYY$};
\draw (-4,0)node[roundnode1]{$IYY$}--(-2,-2)node[roundnode1]{$IIY$};
\draw (-2,-2)node[roundnode1]{$IIY$}--(0,-2)node[roundnode1]{$IXI$};
\draw (0,-2)node[roundnode1]{$IXI$}--(0,2)node[roundnode1]{$IIX$};
\draw (0,2)node[roundnode1]{$IIX$}--(2,2)node[roundnode1]{$XYX$};
\draw (2,2)node[roundnode1]{$XYX$}--(4,0)node[roundnode1]{$YYZ$};
\draw (4,0)node[roundnode1]{$YYZ$}--(2,-2)node[roundnode1]{$IIZ$};
\draw (2,-2)node[roundnode1]{$IIZ$}--(0,-2)node[roundnode1]{$IXI$};
\draw (0,-2)node[roundnode1]{$IXI$}--(0,2)node[roundnode1]{$IIX$};
\end{tikzpicture}
\caption{Two 5-cycles joined together : a graph achievable when $m \geq 3$. Such a graph can't produce contextuality for any realization of Pauli operators.}
\label{2_nodes}
\end{center}
\end{figure}
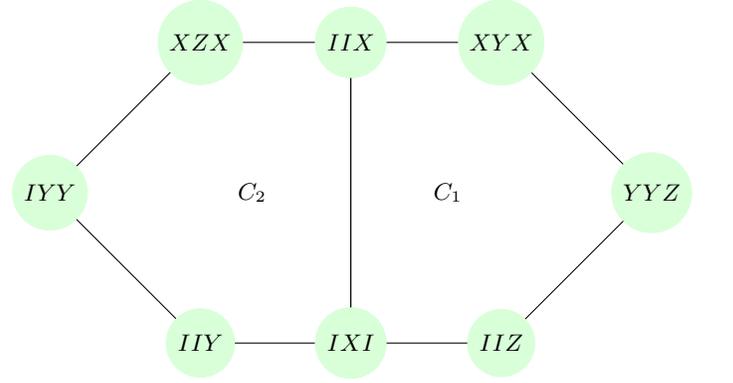
\textbf{Case II.} Two cycles share an edge as depicted in Figure~\ref{2_nodes}. This Pauli-graph always possesses a non-contextual model due to the existence of a JPD ($p_T$) over it:
$$p_T = \frac{p_{C_1}p_{C_2}}{p_{\{IIX,IXI\}}}$$ where $p_{C_i}$ is a JPD for $i^{th}$ cycle and $p_{\{IIX,IXI\}}$ is JPD over the common context $\{IIX,IXI\}$. Due to the no-disturbance condition, this common context will always admit a unique probability distribution.

\begin{figure*}
\begin{center}
\begin{tikzpicture}
[roundnode1/.style={circle, fill=green!15, minimum size=5mm}]
\node[text width=0.4cm] at (0,-3) 
    {(a)};
\node[text width=0.8cm] at (1.3,1.8) 
    {$1$};
\node[text width=0.8cm] at (2.5,0) 
    {$2$};
\node[text width=0.8cm] at (1.3,-1.8) 
    {$3$};
\node[text width=0.8cm] at (0.5,-1) 
    {$4$}; 
\node[text width=0.8cm] at (0.5,1) 
    {$5$};   
\node[text width=0.8cm] at (-1.2,1.7) 
    {$6$}; 
\node[text width=0.8cm] at (-2.1,0) 
    {$7$}; 
\node[text width=0.8cm] at (-1.3,-1.7) 
    {$8$};      
\begin{scope}
\draw (0,0)node[roundnode1]{$P_5$}--(0,2)node[roundnode1]{$P_1$};
\draw (0,2)node[roundnode1]{$P_1$}--(-2,1)node[roundnode1]{$P_7$};
\draw (-2,1)node[roundnode1]{$P_7$}--(-2,-1)node[roundnode1]{$P_6$};
\draw (-2,-1)node[roundnode1]{$P_6$}--(0,-2)node[roundnode1]{$P_4$};
\draw (0,-2)node[roundnode1]{$P_4$}--(0,0)node[roundnode1]{$P_5$};
\draw (0,2)node[roundnode1]{$P_1$}--(2,1)node[roundnode1]{$P_2$};
\draw (2,1)node[roundnode1]{$P_2$}--(2,-1)node[roundnode1]{$P_3$};
\draw (2,-1)node[roundnode1]{$P_3$}--(0,-2)node[roundnode1]{$P_4$};
%\draw (2,-2)node[roundnode1]{$IIZ$}--(0,-2)node[roundnode1]{$IXI$};
%\draw (0,-2)node[roundnode1]{$IXI$}--(0,2)node[roundnode1]{$IIX$};
\end{scope}    
\begin{scope}[xshift=7cm]
\node[text width=0.4cm] at (0,-3) 
    {(b)};
\draw (0,0)node[roundnode1]{$XI$}--(0,2)node[roundnode1]{$IX$};
\draw (0,2)node[roundnode1]{$IX$}--(-2,1)node[roundnode1]{$YX$};
\draw (-2,1)node[roundnode1]{$YX$}--(-2,-1)node[roundnode1]{$ZY$};
\draw (-2,-1)node[roundnode1]{$ZY$}--(0,-2)node[roundnode1]{$IY$};
\draw (0,-2)node[roundnode1]{$IY$}--(0,0)node[roundnode1]{$XI$};
\draw (0,2)node[roundnode1]{$IX$}--(2,1)node[roundnode1]{$ZX$};
\draw (2,1)node[roundnode1]{$ZX$}--(2,-1)node[roundnode1]{$YY$};
\draw (2,-1)node[roundnode1]{$YY$}--(0,-2)node[roundnode1]{$IY$};
%\draw (2,-2)node[roundnode1]{$IIZ$}--(0,-2)node[roundnode1]{$IXI$};
%\draw (0,-2)node[roundnode1]{$IXI$}--(0,2)node[roundnode1]{$IIX$};
\end{scope}
\end{tikzpicture}
\caption{Two 5-cycles sharing two edges $\{P_1,P_5\},\{P_4,P_5\}$: a graph achievable $\forall$ $m \geq 2$: (a) represents the generic $m$-qubit Paulis and respective context label on the edges. (b) exemplifies one realization with Paulis that violates a noncontextuality inequality of the scenario.}
\label{violation_producing_cases}
\end{center}
\end{figure*}
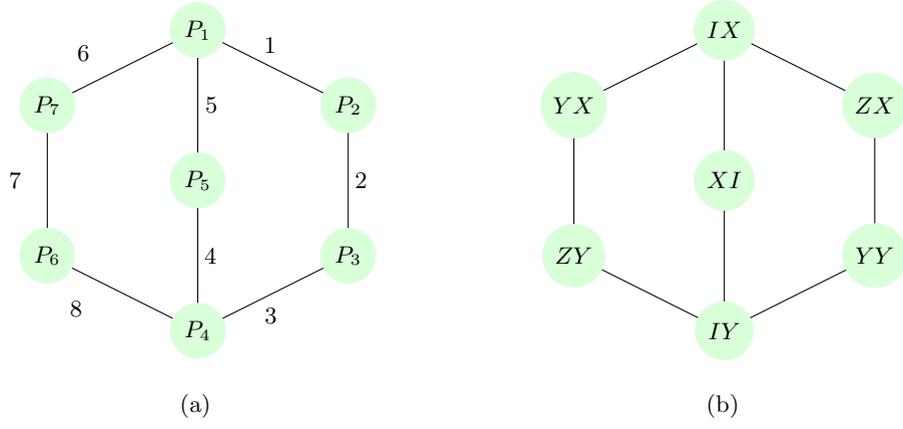
\textbf{Case III.} The two 5-cycles conjoin together with two edges in common. This case becomes non-trivial because now there is no guarantee that the JPDs over the two cycles give the same marginals for the unmeasurable correlation, see $\{P_1,P_4\}$ in Figure~\ref{violation_producing_cases}: Due to the no-disturbance condition the JPDs over the two cycles are constrained to give the same overlap for the common contexts $\{P_1,P_5\}$ and $\{P_4,P_5\}$ in Figure~\ref{violation_producing_cases}. But since $\{P_1,P_4\}$ is not a context, the same doesn't hold for it. 

\hspace{0.5cm}To test whether Pauli operators produce contextuality we first obtained all the noncontextuality inequalities corresponding to this graph, using the PORTA software package \cite{porta}. Then, using the qubit Pauli operators illustrated in Fig.~\ref{violation_producing_cases}(b), we observed a violation of the following noncontextuality inequality:
\begin{multline}
-p_{--}^{1}+ p_{-+}^{2} + p_{-+}^{3}- p_{--}^{4}+(p_{-+}^{5}+p_{+-}^{5}+2p_{--}^{5})-\\ 
p_{--}^{6}+(p_{-+}^{7}+p_{+-}^{7}+2p_{--}^{7})+
(p_{+-}^{8}-p_{-+}^{8}+ p_{--}^{8}) \leq 3 
\end{multline}
Here $p_{ab}^{i}$ represents probability of joint outcome $ab$ on measuring the $i^{th}$ context. By translating the projectors of contexts to Pauli operators, we can translate this inequality into an operator inequality, as follows:
\begin{equation}
\begin{split}
-(P_{1}P_{2}+P_{2}P_{3}+P_{3}P_{4}+P_{4}P_{5}+P_{1}P_{7}) + P_{4}P_{6} - \\ (P_{4}+ P_{5} + P_{6}+P_{7}) \leq 4I
\end{split}
\end{equation}
Using the Pauli realization in Fig.~\ref{violation_producing_cases}(b), the l.h.s above becomes: 
$$-(ZI+XZ+YI+XY+YI)+ ZI - (IY + XI + ZY + YX)$$
Now, if we check for the maximum eigenvalue of this expression, it turns out to be $4.2716 > 4$. Hence, a violation of the inequality. The corresponding state is:
$(0.2787 - 0.5952i,-0.2787 - 0.3342i,-0.4092 + 0.1482i,-0.4352 + 0.0000i)^{T}$ in the computational basis. This implies that the distribution for the context $\{P_1,P_4\}$ obtained from the graph, via marginalisation, won't be unique. Therefore, these conjoined cycles produce contextuality. This illustrates that the quantum violation of the NC inequalities for a given scenario (compatibility graph) doesn't necessarily accompany the violation of some induced cycle NC inequality within the graph. So, it seems that the fundamental role of an induced cycle in a compatibility graph of a fixed set of quantum measurements is only to preclude application of \Vorobev's theorem to the graph. The chosen set of quantum measurements then may or may not produce contextuality.

Another way to understand this is in terms of the geometrical approach to NC correlations i.e.~the correlation polytopes. In Ref.~\cite{choudhary2024lifting}, the authors showed that a facet-defining inequality for a KS scenario remains a facet-defining inequality for any scenario that is an extension of it. Therefore, for scenarios that can be seen as extensions of cycle scenarios, the $n$-cycle noncontextuality inequalities are just a proper subset of all facet-inducing inequalities of the scenario. The same holds for the case presented in case III above. Pauli contextuality observed is then just the violation of a noncontextuality inequality outside this proper subset.

\end{document}